\documentclass[a4paper,11pt]{article}
\pdfoutput=1

\usepackage[left=2.5cm, right=2.5cm, top=2.5cm, bottom=3cm]{geometry}
\usepackage{xspace}
\usepackage[protrusion=true,expansion=true]{microtype}
\usepackage{caption}
\usepackage{graphicx}
\usepackage{bbm,amsfonts}
\usepackage[fleqn]{amsmath}
\usepackage{ifthen}
\usepackage{paralist}
\usepackage{cite}
\usepackage{amsthm}
\usepackage[hang,flushmargin]{footmisc} 

\interfootnotelinepenalty=10000

\newtheoremstyle{theorems}
  {\topsep}   
  {\topsep}   
  {\itshape}  
  {0pt}       
  {\bfseries} 
  {.}         
  {5pt plus 1pt minus 1pt} 
  {\thmname{#1}\thmnumber{ #2}\thmnote{ (#3)}} 

\newtheoremstyle{definitions}
  {\topsep}   
  {\topsep}   
  {}  
  {0pt}       
  {\bfseries} 
  {.}         
  {5pt plus 1pt minus 1pt} 
  {\thmname{#1}\thmnumber{ #2}\thmnote{ (#3)}} 

\theoremstyle{theorems}
\newtheorem{theorem}{Theorem}
\newtheorem{lemma}{Lemma}
\newtheorem{corollary}{Corollary}

\theoremstyle{definitions}
\newtheorem{definition}{Definition}
\newtheorem{claim}{Claim}

\pagestyle{plain}

\newcommand{\conbound}{Azuma-Hoeffding inequality\xspace}
\newcommand{\ie}{i.\,e.\xspace}
\newcommand{\eg}{e.\,g.\xspace}
\newcommand{\nat}{\mathbbm{N}}

\newcommand{\real}{\mathbbm{R}}
\newcommand{\tape}{\phi}
\newcommand{\algA}{\ensuremath{\mathtt{A}}\xspace}
\newcommand{\algB}{\ensuremath{\mathtt{A''}}\xspace}
\newcommand{\algC}{\ensuremath{\mathtt{C}}\xspace}
\newcommand{\algR}{\ensuremath{\mathtt{R}}\xspace}
\newcommand{\adv}{\ensuremath{\mathtt{Adv}}\xspace}
\newcommand{\algOpt}{\ensuremath{\texttt{\textsc{Opt}}}\xspace}
\newcommand{\prob}{\ensuremath{\mathrm{Pr}}\mathopen{}}
\newcommand{\ev}{\ensuremath{\mathbbm{E}}\mathopen{}}
\newcommand{\costsymb}{\ensuremath{\mathrm{Cost}}\xspace}
\newcommand{\cost}[2][\empty]{%
  \ifthenelse{\equal{#1}{\empty}}{%
    \ensuremath{\costsymb}(#2)\xspace%
  }{%
    \ensuremath{\costsymb}_{#1}(#2)\xspace%
  }%
}
\newcommand{\opt}[1]{\cost[#1]{\algOpt}}
\newcommand{\optval}[1]{\ensuremath{\mathrm{opt}_{#1}}\xspace}%
\newcommand{\jss}{\textsc{Jss}\xspace}
\newcommand{\reqx}{\ensuremath{\hat{x}}}
\newcommand{\ansy}{\ensuremath{\hat{y}}}
\newlength{\punctuationfootlength}

\newcommand{\punctuationfootnote}[2]{#2\settowidth{\punctuationfootlength}%
  {#2}\hspace{-.3\punctuationfootlength}\footnote{#1}}

\newcommand{\emptyword}{\ensuremath{\lambda}\xspace}
\newcommand{\bW}{\overline{W}}

\newcommand{\ind}{\ensuremath{\overline{n}}}

\title{
    Randomized Online Computation with\\
    High Probability Guarantees\thanks{
        The research is partially funded by the SNF grant 200021--141089 and 
        Deutsche Forschungsgemeinschaft grant BL511/10-1.
    }
}

\newcommand{\email}[1]{{\tt #1}\xspace}
\newlength{\halftext}
\setlength{\halftext}{\linewidth}
\divide\halftext by 2
\advance\halftext by 0.9cm

\author{\begin{tabular}{cc}
          \parbox{\the\halftext}{\centering
               Dennis Komm\\
               Department of Computer Science\\
               ETH Zurich, Switzerland\\
               \email{dennis.komm@inf.ethz.ch}\phantom{\texttt{g}}\\
           }&
           \parbox{\the\halftext}{\centering
               Rastislav Kr\'alovi\v{c}\\
               Comenius University\\
               Bratislava, Slovakia\\
               \email{kralovic@dcs.fmph.uniba.sk}\\
           }\\
           \ & \\
           \parbox{\the\halftext}{\centering
               Richard Kr\'alovi\v{c}\\
               ETH Zurich / Google inc.\\
               Zurich, Switzerland\\
               \email{richard.kralovic@dcs.fmph.uniba.sk}
           }&
           \parbox{\the\halftext}{\centering
               Tobias M\"omke\\
               Department of Computer Science\\
               Saarland University, Germany\\
               \email{moemke@cs.uni-saarland.de}\phantom{\texttt{g}}
           }
           \end{tabular}
}
\date{}

\begin{document}

\maketitle

\begin{abstract}
  We study the relationship between the competitive ratio and the tail
  distribution of randomized online minimization problems.  To this end, we
  define a broad class of online problems that includes some of
  the well-studied problems like paging, $k$-server and metrical task systems on
  finite metrics, and show that for these problems it is possible to
  obtain, given an algorithm with constant expected competitive ratio, another
  algorithm that achieves the same solution quality up to an arbitrarily small
  constant error a with high probability; the ``high probability'' statement is
  in terms of the optimal cost.
  Furthermore, we show that our assumptions are tight in the sense that
  removing any of them allows for a counterexample to the theorem.
  In addition, there are examples of other problems not covered by our
  definition, where similar high probability results can be obtained.
\end{abstract}

\section{Introduction}\label{sec:intro}

In online computation, we face the challenge of designing algorithms that work
in environments where parts of the input are not known while parts of the output
(that may heavily depend on the yet unknown input pieces) are already needed.
The standard way of evaluating the quality of online algorithms is by means of
\emph{competitive analysis}, where one compares the outcome of an online
algorithm to the optimal solution constructed by a hypothetical optimal offline
algorithm.  Since deterministic strategies are often proven to fail for the most
prominent problems, randomization is used as a powerful tool to construct
high-quality algorithms that outperform their deterministic counterparts.  These
algorithms base their computations on the outcome of a random source; for a
detailed introduction to online problems we refer the reader to the
literature~\cite{BE98}.

The most common way to measure the performance of randomized algorithms is to
analyze the worst-case expected outcome and to compare it to the optimal
solution.  With offline algorithms, a statement about the expected outcome is
also a statement about the \emph{outcome with high probability} due to Markov's
inequality and the fact that the algorithm may be executed many times to
amplify the probability of success~\cite{Hro05}.  However, this amplification
is not possible in online settings.  As online algorithms only have one
attempt to compute a reasonably good result, a statement with respect to the
expected value of their competitive ratio may be rather unsatisfying.  As a
matter of fact, for a fixed input, it might be the case that such an
algorithm produces results of a very high quality in very few cases (\ie, for a
rather small number of random choices), but is unacceptably bad for the
majority of random computations; still, the expected competitive ratio might
suggest a better performance.  Thus, if we want to have a certain guarantee
that some randomized online algorithm obtains a particular quality, we must
have a closer look at its analysis.  In such a setting, we would like to state
that the algorithm does not only perform well on average, but ``almost
always.''

Besides a theoretical formalization of the above statement, the main
contribution of this paper is to show that, for a broad class of problems, the
existence of a randomized online algorithm that performs well in expectation
immediately implies the existence of a randomized online algorithm that is
virtually as good with high probability.  Our investigations, however, need to
be detailed in order to face the particularities of the framework. 
First, we show that it is not possible to measure the
probability of success with respect to the input size, which might be
considered the straightforward approach.  
Many of the known randomized online algorithms are naturally divided into some
kind of \emph{phases} (\eg, the algorithm for metrical task systems from
Borodin et al.~\cite{BLS92}, the marking algorithm for paging from Fiat et
al.~\cite{FKL+91}, etc.) where each phase is processed and analyzed separately.
Since the phases are independent, a high probability result (\ie, with a
probability converging to $1$ with an increasing number of phases) can be
obtained.  However, the definition of these phases is specific to each problem
and algorithm.  Also, there are other algorithms (\eg, the optimal paging
algorithm from Achlioptas et al.~\cite{ACN00} and many workfunction-based
algorithms) that use other constructions and that are not divided into phases.
As we want to establish results with high probability that are independent of
the concrete algorithms, we thus have to measure this probability with respect
to another parameter; we show that the cost of an optimal solution is a very
reasonable quantity for this purpose.

Then again it turns out that, if we consider general online problems, the
notions of the expected outcome and an outcome with high probability are still
not related in any way, \ie, we define problems for which these two measures
are incomparable.  Hence, we carefully examine both to which
parameter the probability should relate and which properties we need the
studied problem to fulfill to again allow a division into independent phases;
finally, this allows us to construct randomized online algorithms that perform
well with a probability tending to $1$ with a growing size of the optimal cost.
We show that this technique is applicable for a wide range of online problems.

Classically, results concerning randomized online algorithms commonly analyze
their expected behavior; there are, however, a few exceptions, \eg, Leonardi
et al.~\cite{LMPR01} analyze the tail distribution of algorithms for call
control problems, and Maggs et al.~\cite{MMVW97} deal with online distributed
data management strategies that minimize the congestion in certain network topologies.

\subsection*{Overview of this Paper}

In Section~\ref{sec:prelim}, we define the class of symmetric online
minimization problems and present the main result (Theorem~\ref{thm:exptohp}).
The theorem states that, for any symmetric problem which fulfills certain
natural conditions, it is possible to transform an algorithm with constant
expected competitive ratio $r$ to an algorithm having a competitive ratio of
$(1+\varepsilon)r$ with high probability (with respect to the cost of an
optimal solution).  Section~\ref{sec:mainthm} is devoted to proving
Theorem~\ref{thm:exptohp}.  We partition the run of the algorithm into phases
such that the loss incurred by the phase changes can be amortized;  however, to
control the variance within one phase, we need to further subdivide the phases.
Modelling the cost of single phases as dependent random variables, we obtain a
supermartingale that enables us to apply the \conbound and thus to
obtain the result.  These investigations are followed by applications of the
theorem in Section~\ref{sec:applications} where we show that our result is
applicable for task systems and that for the $k$-server problem on
unbounded metric spaces, no comparable result can be obtained.  We further
elaborate on the tightness of our result in Section~\ref{sec:discussion}.

\section{Preliminaries}\label{sec:prelim}

We use the following definitions of online algorithms \cite{BE98} that deal
with online minimization problems. 

\begin{definition}[Online Algorithm]\label{dfn:online-alg}
  Consider an initial configuration $I$ and an input sequence
  $x=(x_1,\dots,x_n)$.
  An \emph{online algorithm} \algA computes the output sequence
  $\algA(I,x)=(y_1,\dots,y_n)$, where  $y_i=f(I,x_1,\dots,x_i)$ for some function $f$.
  The \emph{cost} of the solution $\algA(I,x)$ is denoted by $\cost[I,x]{\algA}$.
\end{definition}

For the ease of presentation, we refer to the tuple that consists of the
initial configuration and the input sequence, \ie, $(I,x)$, as the input of the problem.
Even though the initial configuration is not explicitly introduced in the
definition in \cite{BE98}, it is often very natural, and it is used in the
definitions of some well-known online problems (\eg, the $k$-server problem
\cite{Kou09}).  As we see later, the notion of an initial configuration plays an
important role in the relationship between different variants of the
competitive ratio.

Since, for the majority of online problems, deterministic strategies are
often doomed to fail in terms of their output quality, randomization is
used in the design of online algorithms~\cite{Hro05,BE98,IK97}.  Formally,
randomized online algorithms can be defined as follows.

\begin{definition}[Randomized Online Algorithm]
  A \emph{randomized online algorithm} $\algR$ computes the output sequence
  $\algR^{\tape}(I,x)=(y_1,\dots,y_n)$
  such that $y_i$ is computed from $\tape,I,x_1,\dots,x_i$, where $\tape$
  is the content of the random tape, \ie, an infinite binary sequence where
  every bit is chosen uniformly at random and independent of the others.
  By $\cost[I,x]{\algR}$ we denote the random variable (over the
  probability space defined by $\tape$) expressing the cost
  of the solution  ${\algR}^{\tape}(I,x)$.
\end{definition}

The efficiency of an online algorithm is usually measured in terms of
the competitive ratio as introduced by Sleator and Tarjan~\cite{ST85}.

\begin{definition}[Competitive Ratio]
  An online algorithm is $r$-\emph{competitive}, for some $r\ge 1$, if there exists a constant
  $\alpha$ such that, for every initial configuration $I$ and each input
  sequence $x$, $\cost[I,x]{\algA}\le r\cdot \opt{I,x}+\alpha$,
  where $\opt{I,x}$ denotes the value of the optimal solution for the given instance;
  an online algorithm is \emph{optimal} if it is $1$-competitive with
  $\alpha=0$.
\end{definition}

When dealing with randomized online algorithms we compare the
expected outcome to the one of an optimal algorithm. 

\begin{definition}[Expected Competitive Ratio]
  A randomized online algorithm $\algR$ is $r$-com\-pet\-i\-tive\footnote{The notion 
  of competitiveness for randomized online algorithms as used in this paper is called
  competitiveness against an \emph{oblivious adversary} in the literature. For an
  overview of the different adversary models, see, \eg, \cite{BE98}.} \emph{in expectation} 
  if there exists a constant $\alpha$ such that, for every initial configuration $I$ and
  input  sequence $x$, $\ev\left[\cost[I,x]{\algR}\right] \le
  r\cdot\opt{I,x}+\alpha$.
\end{definition}

In the sequel, we analyze the notion of \emph{competitive ratio with high
probability}. Before stating the definition, however, we quickly discuss what
parameter the high probability should relate to.  As already mentioned, a natural way would be to
define an event to have high probability if the probability that it appears
tends to $1$ with increasing input length (\ie, the number of requests).
However, this does not seem to be very useful; consider, \eg, the well-known
paging problem~\cite{BE98,IK97} with cache size $k$ (we describe and study paging more thoroughly
in Subsection~\ref{subsec:paging}): For any input $x$ of length $n$ and any
competitive ratio $r$ and any $d$, there is an input $x'$ of length $dn$ formed by repeating
every request $d$ times.  Hence, for any algorithm, the performance on $x$
and $x'$ is the same\punctuationfootnote{This is true if we assume that the algorithm
only deletes a page from its buffer if a page fault occurs, which is implied by
the problem definition, see~\cite{IK97}.}.  This implies that there is no
randomized algorithm for paging that achieves a competitive ratio of less than
$k$ with a probability approaching $1$ with growing $n$.  Let $r<k$ and suppose
that there exists $n_0\in\nat$ and a randomized online algorithm \algR that,
for any input $x$ with $|x|=n\ge n_0$ is $r$-competitive with probability
$1-1/f(n)$, for some function $f$ that tends to infinity with growing $n$.
Thus, there is a randomized online algorithm $\algR'$ that is $r$-competitive on
every instance $x'$ independent of its length with this probability.  In
particular, if there exists such an algorithm, then there exists a randomized
online algorithm \algC that is $r$-competitive on instances of length $k$ with
probability $1-1/f(n)$, for any $n$.  Now consider the following instance that
consists of $k$ requests and let the
cache be initialized with pages $1,\dots,k$; an adversary requests page $k+1$
at the beginning and a unique page in the next $k-1$ time steps.  Clearly,
there exists an optimal solution with cost $1$.  In every time step in which a
page fault occurs, \algC, using its random source, chooses a page to evict to make space in the
cache.  Since the adversary knows \algC's probability distribution, without
loss of generality, we assume that \algC chooses every page with the same
probability.  Note that there exists a sequence $p_1,\dots,p_k$ of ``bad''
choices that causes \algC to have cost $k$.  In the first time step, \algC
chooses the bad page with probability at least $1/k$; with probability at least
$1/k^2$, it chooses the bad pages in the first and the second time step and so
on.  Clearly, the probability that it chooses the bad sequence is at least
$1/k^k$.  But this immediately contradicts that \algC performs well on this
instance with probability $1+1/f(n)$, for arbitrarily large~$n$.

Then again, for the practical use of paging algorithms, the instances where
also the optimal algorithm makes faults are of interest.  Hence, it seems
reasonable to define the term \emph{high probability} with respect to the cost of an
optimal solution.  In this paper, we use a strong notion of high probability
requiring the error probability to be subpolynomial.

\begin{definition}[Competitive Ratio w.h.p.]\label{dfn:compwhp}
  A randomized online algorithm \algR is $r$-com\-pet\-i\-tive \emph{with high
  probability} (w.h.p. for short) if, for any  $\beta\ge1$, there exists a
  constant $\alpha$ such that for all initial configurations and inputs $(I,x)$ it holds that
  \[ \prob\left[\cost[I,x]{\algR}\ge r\cdot
     \opt{I,x}+\alpha\right]\le\left(2+\opt{I,x}\right)^{-\beta}. \]
\end{definition}

First, note that the purpose of the constant $2$ on the right-hand side of the
formula is to properly handle inputs with a small (possibly zero) optimum.  The
choice of the particular constant is somewhat arbitrary (however, it should be
greater than $1$) since the $\alpha$ term on the left-hand side hides the
effects.  We now show that the two notions of the expected and the high-probability
competitiveness are incomparable. Let $[n]$ denote the set $\{1,\dots,n\}$.

\begin{enumerate}
  \item On the one hand, there are problems for which the competitive ratio w.h.p. is
    better than the expected one. Consider, \eg, the following problem. There
    is a unique initial configuration $I$ and the input sequence consists of
    $n+1$ bits $x_0=1,x_1,\dots,x_n$. An online algorithm has to produce
    one-bit answers $y_0,\dots,y_{n+1}$.
    If, for every $i\in [n]$,
    it holds that $y_{i-1}=x_i$, the cost is $D=2^{2n}$, otherwise the cost is
    $\sum_{i=0}^nx_i$, which is optimal. A straightforward algorithm that
    guesses each bit with probability $1/2$ has probability $1-1/2^n$
    to be optimal on every input. 
    
    Consider some $\beta\ge1$; let $n_\beta$ be the smallest integer such that
    $2^{n_\beta}\ge(n_\beta+3)^\beta$ and let $\alpha=2^{2n_\beta}$.  For any
    input of length $n\ge n_\beta$ we have
    \[ \prob\left[\cost[I,x]{\algR}>\opt{I,x}\right]\le\frac{1}{2^n} 
       \le\frac{1}{(n+3)^\beta}\le\frac{1}{(2+\opt{I,x})^\beta}. \]
    For inputs of length at most $n_\beta$, any solution has a cost of at most $\alpha$, so 
    \[ \prob\left[\cost[I,x]{\algR}>\alpha\right]=0. \]
    Hence, the algorithm is $1$-competitive w.h.p.
  
    However, for any algorithm, there is an input such that the probability of
    guessing the whole sequence is at least $1/2^n$, so the expected
    cost is at least $D/2^n$. Since the optimum is at most $n+1$, any algorithm
    has an expected competitive ratio of at least 
    \[ \frac{D}{(n+1)2^n}=\frac{2^n}{n+1}. \]
  \item On the other hand, the following problem shows that sometimes the expected
    performance is better than the one we get w.h.p.  In fact, we show that
    the gap between these two measures can be arbitrarily large. Consider a
    problem with $n$ requests, where the first $n-1$ ones are just dummy requests
    that serve for padding, and the last one is $x_n\in\{1,\dots,b\}$  for some
    positive integer $b$ that depends on $n$. The answer $y_1$ has to be a
    number from the interval $\{1,\dots,b\}$. The cost is $n$ if
    $y_1\not=x_n$, and $bn$ otherwise. An algorithm that chooses $y_1$
    uniformly at random pays $n$ with probability $(b-1)/b$ and $bn$
    with probability $1/b$; hence the expected cost is
    $n(1+(b-1)/b)\le 2n$.  However, there is always an input such that the
    probability to pay $bn$ is at least $1/b$.  For any $k$, we can choose
    $b:=n^k$.  Then, no algorithm can achieve a solution with cost better than $n^{k+1}$
    with probability at least $1-1/n^k$. Since the optimal cost is $n$, there
    is no algorithm with competitive ratio $n^k$ w.h.p., but
    there is one with an expected competitive ratio of $2$.
\end{enumerate}

However, the problems used in the previous examples were quite artificial; many
real-world online problems share additional properties that guarantee a
closer relationship between the expected and high-probability behavior.
In what follows, we thus focus on so-called \emph{partitionable} problems.

\begin{definition}[Partitionability]\label{dfn:partitionable}
  An online problem is called \emph{partitionable} if there is a non-negative
  function $\mathcal{P}$ such
  that, for any initial configuration $I$, the sequence of requests $x_1,\dots,x_n$, and  
  the corresponding solutions $y_1,\dots,y_n$, we have
  \[ \cost[I,x]{y_1,\dots,y_n}=\sum_{i=1}^n\mathcal{P}(I, x_1,\dots,x_i;y_1,\dots,y_i). \]
\end{definition}

In other words, for a partitionable problem, the cost of a solution is the sum
of the costs of particular answers, and the cost of each answer is independent
of the future input and output.  The partitionability allows us to speak of the
cost of a subsequence of the outputs.  A problem can only fail to be
partitionable if the cost may decrease with additional request-answer pairs.
We can, however, transform every online problem into a partitionable one by
introducing a dummy request at the end as a unique end marker.  This way, we can
assign a value of zero to all answers but the last one.  Therefore, the
partitionability condition stated in this way causes no restriction on the
online problem.  However, we further restrict the behavior, and it will be
convenient to think in terms of the ``cost of a particular answer.''

\begin{definition}[Request-Boundedness]\label{dfn:req-bounded}
  Let the function $\mathcal{P}$ be defined as in
  Definition~\ref{dfn:partitionable}.  A partitionable problem is called
  \emph{request-bounded} if, for some constant $F$, we have
  \[ \forall I,x,y,i\colon\mathcal{P}(I,x_1,\dots,x_i;y_1,\dots,y_i)\le F\quad\text{or}\quad
     \mathcal{P}(I,x_1,\dots,x_i;y_1,\dots,y_i)=\infty. \]
\end{definition}

Note that for any partionable problem there is a natural
notion of a state; for instance, it is the content of the memory for the paging
problem, the position of the servers for the $k$-server problem, etc.  Now we
provide a general definition of this notion.  By $a\cdot b$, we denote
the concatenation of two sequences $a$ and $b$; \emptyword denotes the empty sequence. 

\begin{definition}[State]\label{dfn:state}
  Consider two initial configurations $I$ and $I'$, two sequences of
  requests $x=(x_1,\dots,x_n)$ and $x'=(x_1',\dots,x_m')$, and two
  sequences of outputs $y=(y_1,\dots,y_n)$ and $y'=(y'_1,\dots,y'_m)$.
  The triples $(I,x,y)$ and $(I',x',y')$ are equivalent if,
  for any sequence of requests $x''=(x''_1,\dots,x''_p)$ and a
  sequence of outputs $y''=(y_1'',\dots,y_p'')$,
  the input $(I,x\cdot x'')$ is valid with a solution $y\cdot y''$
  if and only if the input $(I',x'\cdot x'')$ is valid with a solution
  $y'\cdot y''$, and the cost of $y''$ is the same for both solutions.
 
  A \emph{state} $s$ of the problem is an equivalence class over the triples
  $(I,x,y)$.  Let $(I,x,y)$ be some triple in $s$. By
  $\algOpt_s(x')$ we denote the sequence of outputs $y'$ such that
  $y\cdot y'$ is a valid solution of the input $(I,x\cdot x')$ and the
  cost $\cost[J, x']{\algOpt_s(x')}$ of $y'$ is minimal, where $J$
  is the configuration determined by $(I,x,y)$.
  A state $s$ is an \emph{initial} state if and only if it contains some triple
  $(I,\emptyword,\emptyword)$.
\end{definition}

We chose this definition of states as it covers best the properties of
online computations as we need them in our main theorem.  An alternative
definition could use task systems with infinitely many states, but the
description would become less intuitive; we will return to task systems
in Section~\ref{sec:task}.

From now on we sometimes slightly abuse notation and write
$\cost{\algOpt_s(x')}$ instead of $\cost[J,x']{\algOpt_s(x')}$ if
the configuration $J$ corresponds to a triple in $s$, as it is sufficient to know
the state $s$ instead of $J$ in order to determine the value of the function. 
Intuitively, a state from Definition~\ref{dfn:state} encapsulates all
information about the ongoing computation of the algorithm that is relevant for
evaluating the efficiency of the future processing.  Usually, the state is
naturally described in the problem-specific domain (content of cache, current
position of servers, set of jobs accepted so far, etc.).  Note, however, that
the internal state of an algorithm is a different notion since it may, \eg,
behave differently if the starting request had some particular value. The
following properties are crucial for our approach to probability amplification.

\begin{definition}[Opt-Boundedness]\label{dfn:opt-bounded}
  A partitionable online problem is called \emph{opt-bounded}
  if there exists a constant $B$ such that
  $\forall s,s',x\colon |\cost{\algOpt_s(x)}-\cost{\algOpt_{s'}(x)}|\le B$.
\end{definition}

Note that the definition of opt-boundedness implies that any request sequence
$x$ is valid. In particular, the request sequence may end at any time.

\begin{definition}[Symmetric Problem]\label{dfn:symmetric}
  An online problem is called \emph{symmetric} if it is partitionable and every
  state is initial.
\end{definition}

Formally, any partitionable problem may be transformed into a symmetric one
simply by redefining the set of initial states.  However, this transformation
may significantly change the properties of the problem.
Now we are going to state the main result of this paper, namely that, under
certain conditions, the expected competitive ratio of symmetric problems can be
achieved w.h.p.

\begin{theorem}\label{thm:exptohp}
  Consider a opt-bounded symmetric online problem for which there is 
  a randomized online algorithm $\algA$ with constant expected competitive ratio $r$.
  Then, for any constant $\varepsilon>0$, there is a randomized online algorithm
  $\algA'$ with competitive ratio $(1+\varepsilon)r$ w.h.p. (with
  respect to the optimal cost).
\end{theorem}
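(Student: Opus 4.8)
The plan is to run the expected-competitive algorithm $\algA$, but to periodically "restart" it from the current state so that a single unlucky stretch of random bits cannot doom the whole computation, and then to show that the cost over the resulting sequence of (nearly) independent phases concentrates via the \conbound. Concretely: fix $\varepsilon>0$ and $\beta\ge1$. Since the problem is symmetric, every state is initial, so at any point we may legitimately invoke $\algA$ on the current state as if it were a fresh instance; and since the problem is opt-bounded, the optimal cost-to-go from the current state differs from the optimal cost-to-go from a canonical initial state by at most the constant $B$, which is what lets us account for the damage done by cutting the adversary's input into pieces. The phases will be defined adaptively by the accumulated optimal cost: start a new phase whenever the optimal cost incurred since the start of the current phase first exceeds some threshold $T$ (a constant depending on $r$, $\varepsilon$, $B$, and the request-bound $F$). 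Because the problem is request-bounded, each phase's optimal cost lies in $[T, T+F]$ (roughly), so a run against an optimum of value $\optval{}$ decomposes into $\Theta(\optval{}/T)$ phases.

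Within one phase, $\algA$ is $r$-competitive in expectation, so the expected cost of the phase is at most $rT' + \alpha$ where $T'\le T+F+B$ bounds the phase's local optimum (the $+B$ coming from opt-boundedness, since $\algA$ is started from whatever state the previous phase left us in, not from the true offline optimum's state). To get concentration I would model the phase costs as a sequence of random variables $Z_1,Z_2,\dots$ and form the partial sums $S_j = \sum_{i\le j}(Z_i - \ev[Z_i\mid \text{past}])$; this is a martingale (more precisely, truncating the phase costs at some large constant multiple of $T$ gives bounded increments, and the rare event that a phase is not truncated is handled separately by a union bound, since each such event has constant probability and we need all $\Theta(\optval{}/T)$ of them to be controlled — this is exactly why the restarting is essential and why a statement "with respect to the optimum" rather than "with respect to $n$" is the right target). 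With bounded increments I get a supermartingale and apply Azuma–Hoeffding: the probability that $\sum Z_i$ exceeds its expectation by an additive $\delta\optval{}$ is at most $\exp(-c\,\delta^2\optval{}/T)$, which for $\optval{}$ large enough beats $(2+\optval{})^{-\beta}$; small $\optval{}$ is absorbed into the additive constant $\alpha$ as in Definition~\ref{dfn:compwhp}. Choosing $T$ large makes the per-phase overhead $(\alpha + (F+B)r)/T$ as small as we like relative to $r$, which is where the factor $(1+\varepsilon)$ is spent: $\ev[\sum Z_i] \le r\optval{} + (\text{overhead per phase})\cdot(\#\text{phases}) + O(1) \le (1+\varepsilon/2)r\,\optval{} + O(1)$, and the Azuma deviation eats the remaining $\varepsilon/2$.

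**The main obstacle** I anticipate is controlling the variance \emph{within} a single phase well enough to get bounded martingale increments. A phase has bounded \emph{optimal} cost, but a priori nothing bounds the cost $\algA$ actually pays in that phase — $\algA$ is only good in expectation, so with small probability it could pay a huge amount in one phase. The fix sketched above (truncate each phase's cost at a large constant threshold, bound the contribution of the truncated-away tail using the expectation bound and Markov, and union-bound the "phase overran" events over all phases) works but is delicate precisely because there are $\Theta(\optval{})$ phases and each bad event has only constant probability; one needs the threshold and the phase length $T$ tuned so that a single phase overrunning is already a constant-probability-times-small event, or alternatively one subdivides each phase further into sub-phases to drive the per-sub-phase failure probability down — indeed the excerpt's overview mentions exactly this ("to control the variance within one phase, we need to further subdivide the phases"). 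So I would expect the real work to be: (i) define phases by accumulated optimum with threshold $T$; (ii) within each phase, define sub-phases and a stopping rule that caps the cost, arguing via the expectation bound that stopping is rare; (iii) assemble the truncated phase costs into a supermartingale with increments bounded by $O(T)$; (iv) apply the \conbound and union-bound the rare "cap hit" events; (v) optimize $T$ and the sub-phase count against $\varepsilon$ and $\beta$, folding the contribution of small optima into $\alpha$.
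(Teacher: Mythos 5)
Your high-level architecture matches the paper's: resets justified by symmetry, opt-boundedness paying for the phase cuts, phases defined by accumulated optimal cost, sub-phasing to control within-phase variance, truncated phase costs feeding a supermartingale, Azuma--Hoeffding for concentration, and small optima absorbed into $\alpha$. You even correctly anticipate the central obstacle (a phase's \emph{realized} cost is unbounded even though its optimal cost is bounded). However, there are two gaps in how you propose to close it.

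First, the truncation level. You propose capping each phase's cost at a \emph{constant} multiple of $T$. This cannot work: with $D$ a constant cost-per-subphase, the probability that a phase's cost exceeds a constant cap $c_0 T$ is $\Theta(p^{c_0 T/D})$, a constant independent of the optimum. Since there are $k=\Theta(\optval{}/T)$ phases, the union bound over ``a phase hit the cap'' events then gives $\Theta(k)$, which is useless against the target $(2+\optval{})^{-\beta}$. You notice the tension (``$\Theta(\optval{})$ phases and each bad event has only constant probability'') but your proposed fixes don't resolve it: retuning $T$ does not change the per-phase tail (which depends only on the cap measured in units of $D$), and ``subdividing further'' does not by itself lower the per-subphase failure probability. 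The paper's resolution is to let the truncation threshold \emph{grow}: $\bW_i := \min\{W(n_i,n_{i+1}-1), c\log k\}$. Because the number of subphases has a geometric tail (Lemma~\ref{lm:Xprob}), a cap of $c\log k$ makes the per-phase overrun probability $p^{\Theta(\log k)} = k^{-\Theta(c)}$, so with $c$ large enough the union bound over $k$ phases gives $O(k^{-\beta})$. The Azuma increment bound becomes $c\log k$ instead of a constant, which degrades the exponent from $\Theta(k)$ to $\Theta(k/\log^2 k)$ --- but that is still superpolynomial and beats $(2+kC)^{-\beta}$. This $\log k$ calibration is the crux and is not a ``constant multiple of $T$.''

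Second, the theorem has no request-boundedness hypothesis, yet your phase construction (``each phase's optimal cost lies in $[T,T+F]$'') uses the request bound $F$ from the start. Opt-boundedness only bounds the \emph{difference} of the optima from two states, not the cost of a single request, so an opt-bounded symmetric problem need not be request-bounded. The paper first proves the theorem for request-bounded problems and then reduces the general case to it via a $(\sigma,\tau)$-truncated surrogate problem $P'$ together with Claim~\ref{claim:bounds}, which shows that very expensive answers can be replaced by greedy ones at the price of an extra $\delta$ in the ratio, and that any solution to $P'$ maps back to a solution of $P$ with no worse competitive ratio. Without this reduction the proof only covers the restricted case.
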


We prove this theorem in the subsequent section. 

\section{Proof of Theorem~\ref{thm:exptohp}}\label{sec:mainthm}

For ease of presentation, we first provide a proof for a restricted setting
where the online problem at hand is also request-bounded.

The algorithm $\algA'$ simulates \algA and, on some specific places, performs
a \emph{reset} operation: if a part $x'$ of the input has been read so far,
and a corresponding output $y'$ has been produced, $(I,x',y')$
belongs to the same state as $(I',\emptyword,\emptyword)$, for some initial
configuration $I'$, because we are dealing
with a symmetric problem; hence, \algA can be restarted by $\algA'$ from $I'$. 

The general idea to boost the probability of acquiring a low cost is to
perform a reset each time the
algorithm incurs too much cost and to use Markov's inequality to bound the
probability of such an event.  However, the exact value of how much is ``too
much'' depends on the optimal cost of the input which is not known in advance.  Therefore,
the input is first partitioned into \emph{phases} of a fixed optimal cost, and
then each phase is cut into \emph{subphases} based on the cost incurred so far.
A reset may cause an additional expected cost of $r \cdot B$ for the subsequent phase
compared to an optimal strategy starting from another state, where $B$ is the constant of the 
opt-boundedness (Definition~\ref{dfn:opt-bounded}), \ie, $B$ bounds the different costs
between two optimal solutions for a fixed input for different states.
We therefore have to ensure that the phases are long enough so as to amortize this
overhead. 

From now on let us consider $\varepsilon$, $r$, $B$, and $F$ to be fixed
constants; recall that $F$ originates from the
request-boundedness property of the online problem at hand
(Definition~\ref{dfn:req-bounded}).  The algorithm $\algA'$ is parameterized by two
parameters $C$ and $D$ that depend on $\varepsilon$, $r$, $B$, and $F$.  These
parameters control the length of the phases and subphases, respectively, such
that $C+F$ delimits the optimal cost of one phase and $D+F$ delimits the cost
of the solution computed by $\algA'$ on one subphase; we require that $D > r(C
+ F + B)$.

Consider an input sequence $x=(x_1,\dots,x_n)$, an initial configuration
$I$, and let the optimal cost of the input $(I,x)$ be between $(k-1)C$ and
$kC$ for some integer $k$.  Then $x$ can be partitioned into $k$ phases
$\tilde{x}_1=(x_1,\dots,x_{n_2-1})$,
$\tilde{x}_2=(x_{n_2},\dots,x_{n_3-1}),\dots,
\tilde{x}_k=(x_{n_{k}},\dots,x_n)$ in such a way that $n_i$ is the minimal
index for which the optimal cost of the input $(I,(x_1,\dots,x_{n_i}))$ is at
least $(i-1)C$.  It follows that the optimal cost for one phase is at least
$C-F$ and at most $C+F$, with the exception of the last phase which may be
cheaper.  Note that this partition can be generated by the online algorithm
itself, \ie, $\algA'$ can determine when a next phase starts.  There are only
two reasons for $\algA'$ to perform a reset: at the beginning of each phase and
after incurring a cost exceeding $D$ since the last reset.  Hence, $\algA'$
starts each phase with a reset, and the processing of each phase is partitioned
into a number of subphases each of cost at least $D$ (with the exception of the
possibly cheaper last subphase) and at most $D+F$.

Now we are going to discuss the cost of $\algA'$ on a particular input.  Let us
fix the input $(I,x)$ which subsequently also fixes the indices
$1=n_1,n_2,\dots,n_k$.  Let $S_i$ be a random variable denoting the state of
the problem (according to Definition~\ref{dfn:state}) just before processing
request $x_i$, and let $W(i,j)$, $i\le j$, be a random variable denoting the
cost of $\algA'$ incurred on the input $x_i,\dots,x_j$. The following claim is
obvious.

\begin{claim}
  If $\algA'$ performs a reset just before processing $x_i$, then  
  $S_i$ captures all the information from the past 
  $W(i,j)$ depends on.
  In particular, if we fix $S_i=s$, $W(i,j)$ does not depend on $W(l_1,l_2)$,
  for any $l_1\le l_2\le i$ and any state $s$.
  
\end{claim}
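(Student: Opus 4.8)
The claim is essentially a statement that a reset "renews" the process: once $\algA'$ resets before $x_i$, the algorithm is restarted from a fresh initial configuration $I'$ whose state equals $S_i$, and from that point on the only things influencing the cost $W(i,j)$ on the suffix $x_i,\dots,x_j$ are (a) the state $S_i$, (b) the future requests $x_i,\dots,x_j$, which are fixed by the input, and (c) the fresh random bits that $\algA'$ reads after the reset. I would prove it by unwinding the two definitions it rests on — Definition~\ref{dfn:online-alg}/the randomized version, which says $y_i$ is a function of $\tape, I, x_1,\dots,x_i$, and Definition~\ref{dfn:state}, which says that two triples $(I,x',y')$ and $(I',\emptyword,\emptyword)$ in the same state behave identically with respect to the cost of any common continuation.

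First I would make precise what a "reset just before $x_i$" does: $\algA'$ has read $x_1,\dots,x_{i-1}$, produced $y_1,\dots,y_{i-1}$, and the triple $(I,(x_1,\dots,x_{i-1}),(y_1,\dots,y_{i-1}))$ lies in some state $s$; the algorithm then picks an initial configuration $I'$ with $(I',\emptyword,\emptyword)\in s$ and continues the simulation of $\algA$ as if it were just starting on $I'$, reading a fresh, so-far-unused portion of the random tape $\tape$. Since the random tape is an infinite sequence of i.i.d. uniform bits, the bits consumed after the reset are independent of all bits consumed before it; this is the probabilistic core of the argument. By partitionability (Definition~\ref{dfn:partitionable}), $W(i,j) = \sum_{l=i}^{j}\mathcal{P}(I,x_1,\dots,x_l;y_1,\dots,y_l)$, and by the state equivalence each summand equals the corresponding $\mathcal{P}$ evaluated on $(I',\, x_i,\dots,x_l;\, y_i,\dots,y_l)$ — i.e.\ it depends on the past only through $s$. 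Combining these, conditioned on $S_i = s$, the random variable $W(i,j)$ is a deterministic function of $s$, the fixed requests $x_i,\dots,x_j$, and the post-reset tape bits; in particular it is independent of $W(l_1,l_2)$ for any $l_1\le l_2\le i$, since the latter is a function of only the pre-reset tape bits (together with the fixed input).

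The second sentence of the claim then follows immediately from the first together with the independence of disjoint segments of $\tape$: $W(l_1,l_2)$ with $l_2\le i$ is measurable with respect to the bits read before the reset at position $i$, while $W(i,j)$ conditioned on $S_i=s$ is measurable with respect to the disjoint block of bits read after that reset, and these two blocks are independent. Hence conditioning further on any value of $W(l_1,l_2)$ does not change the conditional distribution of $W(i,j)$.

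**Main obstacle.** There is no real difficulty here — as the authors say, the claim is "obvious." The only thing that needs a little care is bookkeeping: one must be sure that after a reset the algorithm never again looks at the portion of the tape used before the reset (so that the independence of tape segments can be invoked cleanly), and one must apply the state definition in the right direction, namely that $(I,x_{<i},y_{<i})$ and $(I',\emptyword,\emptyword)$ being in the same state lets us substitute one for the other when computing the cost of the common continuation $x_i,\dots,x_j$. Everything else is a direct unwinding of Definitions~\ref{dfn:partitionable} and~\ref{dfn:state}.
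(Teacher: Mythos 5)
Your proof is correct and is precisely the natural filling-in of the definitions that the paper intends when it labels this claim ``obvious''; the paper itself provides no argument, so there is nothing differing to compare against. One boundary detail worth flagging, inherited from the claim statement rather than from your reasoning: when $l_2=i$, the random variable $W(l_1,i)$ already includes the cost of processing $x_i$, which is a post-reset quantity, so it is not measurable with respect to the pre-reset tape bits as you assert; the intended reading is $l_2<i$ (equivalently, $W(i,j)$ is conditionally independent of everything determined strictly before the reset), and under that reading your tape-segment independence argument is airtight.
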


The overall structure of the proof is as follows.
We first
show in Lemma~\ref{lm:Wexp} that the expected cost incurred during a phase (conditioned by the
state in which the phase was entered) is at most $\mu:=r(C + F + B)/(1-p)$,
where $p:=r(C + F + B)/D < 1$.
We can then consider variables $Z_0,Z_1,\dots,Z_k$ such that
\[ Z_0 := k\mu,\qquad  Z_{i} := (k-i)\mu + \sum_{j=1}^i \bW_j\quad \text{for }i>0, \]
where $\bW_i$ is the cost of the $i$th phase, clipped from above by some logarithmic bound, \ie,
\[ \bW_i := \min\{ W({n_i},{n_{i+1}}-1), c\log k\}, \]
for some suitable constant $c$.  We show in Lemma~\ref{lm:martingale} that  $Z_0,Z_1,\dots,Z_k$
form a bounded supermartingale, and then use the \conbound to conclude that
$Z_k$ is unlikely to be much larger than $Z_0$. By a suitable choice of the free parameters,
this implies that $Z_k$ is unlikely to be much larger than the expected cost of $\algA$.
Finally, we show that w.h.p. $Z_k$ is the cost of the algorithm $\algA'$.

In order to argue about the expected cost of a given phase in  Lemma~\ref{lm:Wexp},
let us first show that a phase is unlikely to have many subphases. For the rest of the 
proof, let $X_j$ be the random variable denoting the number of subphases of phase $j$.

\begin{lemma}\label{lm:Xprob}
  For any $i$, $s$, and any $\delta\in\nat$ it holds that 
  $\prob[X_{i}\ge\delta \mid  S_{n_i}=s]\le p^{\delta-1}$.
\end{lemma}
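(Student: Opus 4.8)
The plan is to prove the bound by induction on $\delta$, exploiting that every subphase of phase $i$ opens with a reset, so that inside a subphase $\algA'$ merely runs a fresh copy of $\algA$ from the current state until the incurred cost first exceeds $D$. For $\delta=1$ there is nothing to prove, since $\prob[X_i\ge 1\mid S_{n_i}=s]=1=p^{0}$. For the inductive step observe that $\{X_i\ge\delta\}\subseteq\{X_i\ge\delta-1\}$, hence
\[ \prob[X_i\ge\delta\mid S_{n_i}=s]=\prob[X_i\ge\delta\mid X_i\ge\delta-1,\,S_{n_i}=s]\cdot\prob[X_i\ge\delta-1\mid S_{n_i}=s], \]
where the last factor is at most $p^{\delta-2}$ by the induction hypothesis (if the conditioning event has probability zero, both sides of the lemma vanish and we are done). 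So it suffices to show $\prob[X_i\ge\delta\mid X_i\ge\delta-1,\,S_{n_i}=s]\le p$.

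To bound this one-step probability I would condition on the whole history $\mathcal{H}$ of random choices up to and including the reset that opens the $(\delta-1)$st subphase of phase $i$, on the event that this subphase exists. Given $\mathcal{H}$, the state $s'$ entered by that reset is fixed, and so is the remaining suffix $x''$ of the requests of phase $i$ (the phase boundaries are determined by the deterministic input). By the claim stated above together with the construction of $\algA'$, once $\algA$ is restarted from a configuration $J'\in s'$ the cost it incurs on $x''$ depends only on fresh random bits, so conditioned on $\mathcal{H}$ it is distributed as $\cost[J',x'']{\algA}$; and since the $(\delta-1)$st subphase runs exactly this fresh copy until the incurred cost first exceeds $D$, the event $X_i\ge\delta$ forces $\cost[J',x'']{\algA}>D$. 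Markov's inequality and the expected competitiveness of $\algA$ therefore give
\[ \prob[X_i\ge\delta\mid\mathcal{H}]\le\frac{\ev\left[\cost[J',x'']{\algA}\right]}{D}\le\frac{r\cdot\opt{J',x''}+\alpha}{D}, \]
where $\alpha$ is the additive constant of the expected competitiveness of $\algA$; being a fixed constant it can be absorbed into $F$ (equivalently, one requires $D>r(C+F+B)+\alpha$), so I suppress it below.

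It remains to bound $\opt{J',x''}$. Since $x''$ is a suffix of the $i$th phase, the restriction to $x''$ of an optimal offline solution for the prefix ending with phase $i$ is a valid solution starting from the state $\hat s$ it reaches there, of cost at most the optimal cost of a whole phase, which is at most $C+F$ by the choice of the phase boundaries; hence $\cost{\algOpt_{\hat s}(x'')}\le C+F$. Applying opt-boundedness (Definition~\ref{dfn:opt-bounded}) to the request sequence $x''$ gives $\opt{J',x''}=\cost{\algOpt_{s'}(x'')}\le\cost{\algOpt_{\hat s}(x'')}+B\le C+F+B$. Substituting yields $\prob[X_i\ge\delta\mid\mathcal{H}]\le r(C+F+B)/D=p$; averaging over all histories $\mathcal{H}$ consistent with $\{X_i\ge\delta-1,\,S_{n_i}=s\}$ gives $\prob[X_i\ge\delta\mid X_i\ge\delta-1,\,S_{n_i}=s]\le p$, completing the induction.

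The step I expect to be most delicate is making the conditioning rigorous: cleanly defining the history $\mathcal{H}$ "up to the start of subphase $\delta-1$'', invoking the claim to see that the fresh run of $\algA$ on $x''$ is independent of $\mathcal{H}$ given $s'$, and handling the boundary effects (the last subphase possibly being cheaper, a cost overrun possibly occurring on the last request of the phase) as well as the additive constant $\alpha$. The bound $\opt{J',x''}\le C+F+B$ for a suffix $x''$ of a phase is the other point that needs a short argument, but it follows from the phase-length bound together with opt-boundedness exactly as sketched.
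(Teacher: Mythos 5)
Your proof is correct and follows essentially the same approach as the paper's: induction on $\delta$, conditioning on the start of the $(\delta-1)$st subphase (your history $\mathcal{H}$ plays the role of the paper's decomposition over pairs $(i',s')$), and then bounding the one-step cost-overrun probability via opt-boundedness plus Markov's inequality applied to the freshly reset run of $\algA$. One minor point in your favor: you explicitly track the additive constant $\alpha$ from the definition of expected competitiveness and absorb it into the choice of $D$, whereas the paper's proof silently drops it when asserting that the expected cost of a subphase is at most $r(C+F+B)$.
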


\begin{proof}
  The proof is done by induction on $\delta$.  For $\delta=1$ the statement holds by definition.
  Let $\ind_c$ denote the index of the first request after $c-1$ subphases, with
  $\ind_1={n_i}$, and $\ind_c=\infty$ if there are less than $c$ subphases.  In order
  to have at least $\delta\ge2$ subphases, the algorithm must enter some suffix
  of phase $i$ at position $\ind_{\delta-1}$ and incur a cost of more than $D$ (see
  Fig.~\ref{fig:subphases}).  Hence,
  \begin{align}
    \label{lm:Xprob:eq:1}
    \prob[X_{i}\ge\delta\mid  S_{n_i}=s] = {} & \prob[\ind_{\delta-1}<n_{i+1}-1\mid S_{n_i}=s]\\
    &\cdot \prob[ W(\ind_{\delta-1},n_{i+1}-1)>D \mid \ind_{\delta-1}<n_{i+1}-1\wedge S_{n_i}=s].\nonumber
  \end{align}
  \begin{figure}[ht]
    \centerline{\includegraphics[width=10cm]{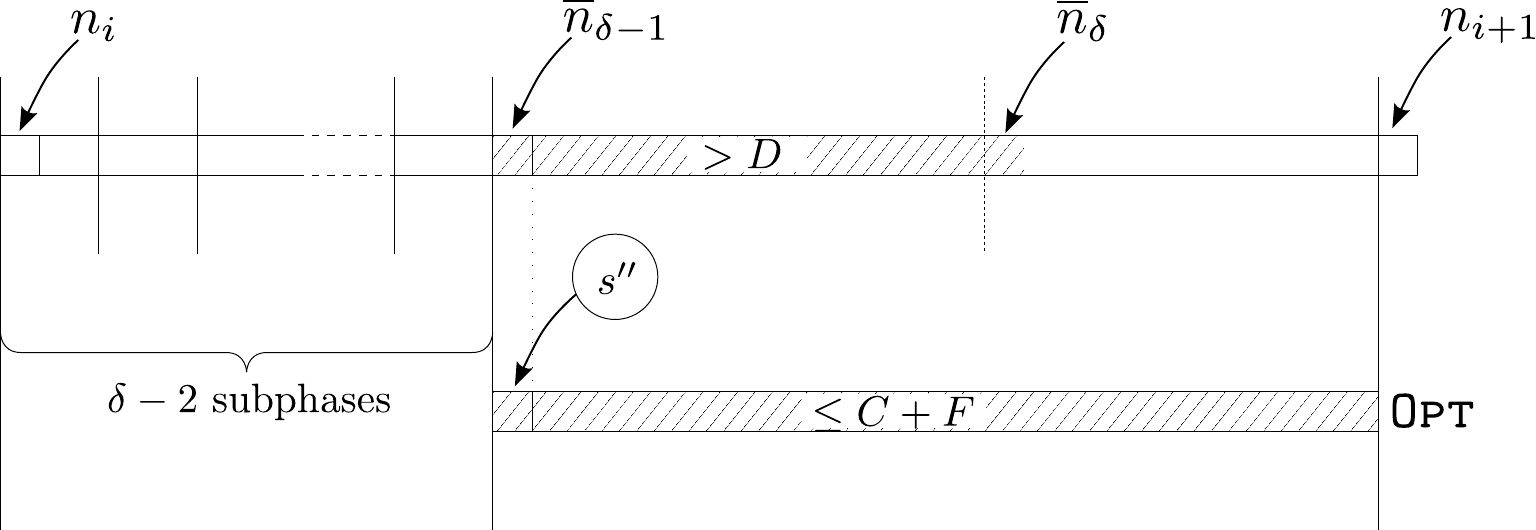}}
    \caption{The situation with $\delta$ subphases.}
    \label{fig:subphases}
  \end{figure}
  The fact that $\ind_{\delta-1}<n_{i+1}-1$ means that there are at least $\delta-1$ subphases, \ie,
  \begin{align}\label{lm:Xprob:eq:2}
    \prob[\ind_{\delta-1}<n_{i+1}-1\mid S_{n_i}=s] =  \prob[X_{i}\ge\delta-1\mid  S_{n_i}=s] \le p^{\delta-2}
  \end{align}
  by the induction hypothesis. Further, we can decompose
  \begin{align}
    \label{lm:Xprob:eq:3}
    &\prob[ W(\ind_{\delta-1}, n_{i+1}-1)>D \mid \ind_{\delta-1}<n_{i+1}-1\wedge S_{n_i}=s] \\
    &= \sum_{i',s'\atop n_i\le i'<n_{i+1}-1} 
      \prob[ W(\ind_{\delta-1}, n_{i+1}-1)>D \mid \ind_{\delta-1}=i'\wedge S_{i'}=s'\wedge S_{n_i}=s]\nonumber\\
    &\quad\cdot \prob[\ind_{\delta-1}=i'\wedge S_{i'}=s' \mid  \ind_{\delta-1}<n_{i+1}-1\wedge S_{n_i}=s].\nonumber
  \end{align}
  Now let us argue about the probability 
  \[ \prob[ W(\ind_{\delta-1}, n_{i+1}-1)>D \mid \ind_{\delta-1}=i'\wedge S_{i'}=s'\wedge S_{n_i}=s]. \]
  The algorithm $\algA'$ performed a reset just before reading $x_{i'}$, 
  so it starts simulating \algA from state $s'$. However, in the optimal solution,
  there is some state $s''$ associated with position $i'$ such that the cost of the remainder of
  the $i$th phase is at most $C+F$. Due to the assumption of the theorem, the optimal cost
  on the input $x_{i'},\dots,x_{n_{i+1}-1}$ starting from state $s'$ is at most $C+F+B$, and the 
  expected cost incurred by \algA is at most $r(C+F+B)$. Using Markov's inequality, we get
  \begin{equation}
    \label{lm:Xprob:eq:4}
    \prob[W(\ind_{\delta-1}, n_{i+1}-1)>D \mid \ind_{\delta-1}=i'\wedge S_{i'}=s'] \le \frac{r(C + F + B)}{D}=p. 
  \end{equation}
  Plugging~\eqref{lm:Xprob:eq:4} into~\eqref{lm:Xprob:eq:3}, and then together with~\eqref{lm:Xprob:eq:2}
  into~\eqref{lm:Xprob:eq:1} yields the result.
  
\end{proof}

Now we can argue about the expected cost of a phase.

\begin{lemma}\label{lm:Wexp}
  For any $i$ and $s$ it holds that 
  $\ev[W(n_i,n_{i+1}-1) \mid S_i=s]\le\mu$.
\end{lemma}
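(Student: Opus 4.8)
The plan is to express the cost of a phase as a sum over its subphases and then use Lemma~\ref{lm:Xprob} to control how many subphases there can be. Concretely, condition on $S_{n_i}=s$ and on $X_i=\delta$, the number of subphases; each subphase incurs cost at most $D+F$ by construction (the algorithm performs a reset as soon as the cost since the last reset exceeds $D$, and a single request adds at most $F$ by request-boundedness). Hence $W(n_i,n_{i+1}-1)\le (D+F)\cdot X_i$, and I would bound
\[
  \ev[W(n_i,n_{i+1}-1)\mid S_{n_i}=s]\le (D+F)\,\ev[X_i\mid S_{n_i}=s]
  = (D+F)\sum_{\delta\ge 1}\prob[X_i\ge\delta\mid S_{n_i}=s].
\]
By Lemma~\ref{lm:Xprob} the tail sum is at most $\sum_{\delta\ge1}p^{\delta-1}=1/(1-p)$, giving a bound of $(D+F)/(1-p)$. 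This is the crude version of the argument and it is not yet $\mu=r(C+F+B)/(1-p)$, so the estimate of the per-subphase cost needs to be sharpened.

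The refinement I would use is to bound the \emph{expected} cost of the $\delta$th subphase conditioned on it occurring, rather than using the worst-case bound $D+F$. When a subphase begins, $\algA'$ has just performed a reset and restarts $\algA$ from the current state $s'$; as argued in the proof of Lemma~\ref{lm:Xprob}, the optimal cost of finishing the phase from $s'$ is at most $C+F+B$ by opt-boundedness, so the expected cost $\algA$ (hence $\algA'$) incurs on that subphase, before the next reset, is at most $r(C+F+B)$. Summing over subphases and using the law of total expectation together with the bound $\prob[X_i\ge\delta\mid S_{n_i}=s]\le p^{\delta-1}$ on the event that subphase $\delta$ is reached, I get
\[
  \ev[W(n_i,n_{i+1}-1)\mid S_{n_i}=s]\le\sum_{\delta\ge1}p^{\delta-1}\cdot r(C+F+B)
  = \frac{r(C+F+B)}{1-p}=\mu,
\]
which is exactly the claimed bound. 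Note $S_i=S_{n_i}$ since $n_i$ is the first index of phase $i$, so the conditioning in the statement matches.

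The main obstacle is making the conditional-expectation bookkeeping rigorous: the event "subphase $\delta$ occurs" and the cost incurred on subphase $\delta$ are not independent, and one must be careful that conditioning on reaching subphase $\delta$ (i.e.\ on $\ind_\delta<n_{i+1}-1$) together with the state at that point does not bias the expected cost of that subphase above $r(C+F+B)$. The resolution is the Claim stated just before Lemma~\ref{lm:Xprob}: because a reset happens at the start of each subphase, the state $S_{\ind_\delta}$ screens off the entire past, so conditioned on $S_{\ind_\delta}=s'$ the expected future cost of the subphase depends only on $s'$ and is bounded by $r(C+F+B)$ uniformly in $s'$; the same decomposition over intermediate positions $(i',s')$ used in equations~\eqref{lm:Xprob:eq:3}–\eqref{lm:Xprob:eq:4} then applies verbatim. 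Once that is in place, the geometric sum is routine.
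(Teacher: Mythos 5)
Your refined argument is exactly the paper's proof: decompose the phase cost over subphases, bound the expected cost of each subphase (conditioned on reaching it, after a reset) by $r(C+F+B)$ via opt-boundedness, and sum the geometric tail $\prob[X_i\ge\delta\mid S_{n_i}=s]\le p^{\delta-1}$ from Lemma~\ref{lm:Xprob} to get $\mu=r(C+F+B)/(1-p)$. The conditioning concern you flag is handled the same way in the paper, by the reset-based screening property and the decomposition over $(i',s')$ pairs borrowed from Lemma~\ref{lm:Xprob}; the crude $(D+F)/(1-p)$ warm-up is correctly discarded.
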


\begin{proof}  
  Let $\ind_c$ be defined as in the proof of Lemma~\ref{lm:Xprob}.
  Using the same arguments, we have that the expected cost of a single subphase is
  \[ \ev[ W(\ind_c,\min\{\ind_{c+1},n_{i+1}-1\}) \mid \ind_c=i'\wedge S_{i'}=s' ] \le r(C+F+B). \]
  Conditioning and decomposing by $\ind_c$ and $s'$, we get that
  \[ \ev[ W(\ind_c,\min\{\ind_{c+1},n_{i+1}-1\}) \mid X_i\ge c] \le r(C+F+B). \]
  Finally, let
  $Q_{i,c}=  W(\ind_c,\min\{\ind_{c+1},n_{i+1}-1\})$ if $X_i\ge c$, or $0$ if $X_i<c$. 
  This gives
  \begin{align*}
    &\ev[W(n_i, n_{i+1}-1)  \mid S_i=s]  = \sum_{c=1}^\infty \ev[ Q_{i,c} \mid S_i=s] \\
    & = \sum_{c=1}^\infty \ev[ Q_{i,c} \mid S_i=s \wedge X_i\ge c]\cdot \prob[X_i\ge c]\\
    &\le \sum_{c=1}^\infty r(C+F+B) p^{j-1} = r(C+F+B)/(1-p).
  \end{align*}
\end{proof}

Once the expected cost of a phase is established, we can construct the supermartingale as follows.

\begin{lemma}\label{lm:martingale}
  For any constant $c>0$, the sequence $Z_0,Z_1,\dots,Z_k$ is a supermartingale.
\end{lemma}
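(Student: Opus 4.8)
The plan is to verify the defining inequality of a supermartingale, namely $\ev[Z_i \mid Z_0,\dots,Z_{i-1}] \le Z_{i-1}$, for each $i \in \{1,\dots,k\}$. Since $Z_i = Z_{i-1} - \mu + \bW_i$ for $i > 0$ (and $Z_1 = (k-1)\mu + \bW_1$ versus $Z_0 = k\mu$), this reduces to showing $\ev[\bW_i \mid \mathcal{F}_{i-1}] \le \mu$, where $\mathcal{F}_{i-1}$ is the natural filtration generated by $Z_0,\dots,Z_{i-1}$. First I would observe that conditioning on $Z_0,\dots,Z_{i-1}$ is implied by conditioning on the full history up to the start of phase $i$; and because $\algA'$ performs a reset just before request $x_{n_i}$, the Claim tells us that all relevant information from the past is captured by the state $S_{n_i}$. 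Thus it suffices to bound $\ev[\bW_i \mid S_{n_i} = s]$ for an arbitrary state $s$.

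Next I would use the clipping: since $\bW_i = \min\{W(n_i, n_{i+1}-1), c\log k\} \le W(n_i, n_{i+1}-1)$ pointwise, monotonicity of expectation gives $\ev[\bW_i \mid S_{n_i}=s] \le \ev[W(n_i, n_{i+1}-1) \mid S_{n_i}=s]$. The key input is then Lemma~\ref{lm:Wexp}, which bounds the latter by $\mu = r(C+F+B)/(1-p)$. (One should note the minor notational point that Lemma~\ref{lm:Wexp} is stated with the conditioning event $S_i = s$; here $S_i$ and $S_{n_i}$ refer to the same object since $n_i$ is the first index of phase $i$, so the two conditionings coincide after relabeling.) Combining, $\ev[Z_i \mid Z_0,\dots,Z_{i-1}] = Z_{i-1} - \mu + \ev[\bW_i \mid \mathcal{F}_{i-1}] \le Z_{i-1} - \mu + \mu = Z_{i-1}$, which is exactly the supermartingale property; the base case $\ev[Z_1] \le Z_0$ follows identically from $\ev[\bW_1] \le \mu$. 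I would also remark that each $Z_i$ is integrable (indeed bounded), since each $\bW_j$ is clipped by $c\log k$, so the conditional expectations are well-defined.

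The main obstacle, and the place where care is genuinely needed rather than routine, is the measure-theoretic bookkeeping connecting the abstract conditioning on $\sigma(Z_0,\dots,Z_{i-1})$ to the problem-specific conditioning on $S_{n_i}$. The subtlety is that the sequence of states along the run, and the positions $n_i$ themselves, depend only on the optimal cost structure of the fixed input $(I,x)$ — so the $n_i$ are deterministic once $(I,x)$ is fixed — but the reset points inside a phase (the subphase boundaries $\ind_c$) are random, and one must be sure that the reset at the phase boundary genuinely ``forgets'' everything: this is precisely the content of the Claim, which asserts that after a reset $W(n_i,\cdot)$ is independent of all $W(l_1,l_2)$ with $l_2 \le n_i$ given the state. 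Since $Z_0,\dots,Z_{i-1}$ is a deterministic function of $\bW_1,\dots,\bW_{i-1}$, which in turn are functions of the $W$-values on earlier phases, conditioning on them factors through conditioning on $S_{n_i}$ together with those earlier $W$-values, and the Claim lets us drop the latter. Once this is spelled out, the rest is immediate from Lemma~\ref{lm:Wexp}. I would keep this argument short, stating it as: condition further on $S_{n_i}$, apply the Claim to discard the remaining dependence, invoke Lemma~\ref{lm:Wexp}, and use the clipping bound.
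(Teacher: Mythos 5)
Your proof is correct and follows essentially the same route as the paper's: reduce to bounding $\ev[\bW_{i+1}\mid Z_0,\dots,Z_i]$, decompose by (or condition on) the state $S_{n_{i+1}}$ at the start of the phase, invoke the Claim to discard the earlier history given that state, use the pointwise bound $\bW\le W$, and finish with Lemma~\ref{lm:Wexp}. You also correctly spotted the indexing slip in the statement of Lemma~\ref{lm:Wexp} (the conditioning should be on $S_{n_i}$, not $S_i$), which the paper itself silently corrects when it applies the lemma.
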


\begin{proof}
  Consider a fixed $c$.
  We have to show that for each $i$, $\ev[Z_{i+1}\mid Z_0,\dots,Z_i]\le Z_i$.
  From the definition of the $Z_i$'s it follows that $Z_{i+1}-Z_i=\bW_{i+1}-\mu$.
  Consider any elementary event $\xi$ from the probability space, and let
  $Z_i(\xi)=z_i$, for $i=0,\dots,k$ be the values of the corresponding random variables.
  We have
  \begin{align*}
    & \ev[Z_{i+1}\mid Z_0,\dots,Z_i](\xi) = \ev[ Z_{i+1}\mid Z_0=z_0,\dots,Z_i=z_i]\\
    &= \ev [ Z_i+\bW_{i+1}-\mu\mid Z_0=z_0,\dots,Z_i=z_i]
    = z_i-\mu+\ev[\bW_{i+1}\mid Z_0=z_0,\dots,Z_i=z_i]\\
    &\textstyle = z_i-\mu+\sum_s\ev[\bW_{i+1}\mid Z_0=z_0,\dots,Z_i=z_i,S_{n_{i+1}}=s]\\
    &\quad\cdot\prob[S_{n_{i+1}}=s \mid Z_0=z_0,\dots,Z_i=z_i]\\
    &\textstyle \le z_i-\mu+\sum_s\ev[W(n_{i+1},n_{i+2}-1)\mid S_{n_{i+1}}=s]\cdot\prob[S_{n_{i+1}}=s \mid Z_0=z_0,\dots,Z_i=z_i]\\
    &\textstyle \le z_i-\mu+\mu\sum_s\prob[S_{n_{i+1}}=s \mid Z_0=z_0,\dots,Z_i=z_i]=z_i=Z_i(\xi),
  \end{align*}
  where the last inequality is a consequence of Lemma~\ref{lm:Wexp}.
  
\end{proof}

Now we can use the following special case of the \conbound~\cite{Azu67,Hoe63}.

\begin{lemma}[Azuma, Hoeffding]\label{lm:Azuma}
  Let $Z_0,Z_1,\dots$ be a supermartingale, such that $|Z_{i+1}-Z_i|<\gamma$.
  Then for any positive real $t$, 
  \[ \prob[ Z_k - Z_0 \ge t]\le\exp\mathopen{}\left(-\frac{t^2}{2k\gamma^2}\right). \]
\end{lemma}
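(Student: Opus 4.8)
The plan is to run the standard exponential-moment (Chernoff-style) argument, adapted so that the supermartingale inequality replaces the usual martingale equality. Write $D_i := Z_i - Z_{i-1}$ for the increments and let $\mathcal{F}_i$ denote the conditioning on $Z_0,\dots,Z_i$; then the supermartingale hypothesis reads $\ev[D_i \mid \mathcal{F}_{i-1}] \le 0$ and the boundedness hypothesis gives $|D_i| < \gamma$ almost surely. For every real $\lambda > 0$, Markov's inequality applied to the non-negative random variable $e^{\lambda(Z_k - Z_0)} = e^{\lambda \sum_{i=1}^{k} D_i}$ gives
\[ \prob[Z_k - Z_0 \ge t] \le e^{-\lambda t}\, \ev\left[ e^{\lambda \sum_{i=1}^{k} D_i} \right] , \]
so it suffices to bound this moment generating function and then to optimize over $\lambda$.

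The heart of the argument is the estimate $\ev\left[ e^{\lambda D_i} \mid \mathcal{F}_{i-1} \right] \le e^{\lambda^2 \gamma^2 / 2}$ for all $\lambda \ge 0$, which is the conditional form of Hoeffding's lemma: if $X$ is a random variable with $|X| < \gamma$ almost surely and $\ev[X] \le 0$, then $\ev[e^{\lambda X}] \le e^{\lambda^2 \gamma^2/2}$ for every $\lambda \ge 0$. I would prove this by first reducing to the centred case. Put $\tilde{X} := X - \ev[X]$; then $\ev[\tilde{X}] = 0$, the variable $\tilde{X}$ still takes values in an interval $[a,b]$ of length $b - a \le 2\gamma$, and
\[ \ev[e^{\lambda X}] = e^{\lambda \ev[X]}\, \ev[e^{\lambda \tilde{X}}] \le \ev[e^{\lambda \tilde{X}}] \]
since $\lambda \ge 0$ and $\ev[X] \le 0$. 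For the centred variable one uses convexity of $x \mapsto e^{\lambda x}$ to bound it on $[a,b]$ by the chord through the two endpoints, takes expectations so that the linear term disappears, and checks that the resulting function of the single parameter $u := \lambda(b-a)$, namely $\phi(u) := -pu + \ln(1 - p + p e^{u})$ with $p := -a/(b-a) \in [0,1]$, satisfies $\phi(0) = \phi'(0) = 0$ and $\phi''(u) = q(1-q) \le 1/4$, where $q := pe^u/(1-p+pe^u)$. By Taylor's theorem $\phi(u) \le u^2/8$, hence $\ev[e^{\lambda \tilde{X}}] \le e^{\lambda^2 (b-a)^2 / 8} \le e^{\lambda^2 \gamma^2 / 2}$. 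Applying this conditionally on $\mathcal{F}_{i-1}$ to $X = D_i$ yields the claimed estimate.

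With this estimate in hand I iterate over the $k$ steps. Since $\sum_{i=1}^{k-1} D_i = Z_{k-1} - Z_0$ is determined by $\mathcal{F}_{k-1}$, the tower property gives
\[ \ev\left[ e^{\lambda \sum_{i=1}^{k} D_i} \right] = \ev\left[ e^{\lambda \sum_{i=1}^{k-1} D_i}\, \ev\left[ e^{\lambda D_k} \mid \mathcal{F}_{k-1} \right] \right] \le e^{\lambda^2 \gamma^2 / 2}\, \ev\left[ e^{\lambda \sum_{i=1}^{k-1} D_i} \right] , \]
and repeating this $k$ times yields $\ev\left[ e^{\lambda (Z_k - Z_0)} \right] \le e^{k \lambda^2 \gamma^2 / 2}$. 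Substituting into the Markov bound gives $\prob[Z_k - Z_0 \ge t] \le \exp\mathopen{}\left( -\lambda t + k \lambda^2 \gamma^2 / 2 \right)$ for every $\lambda > 0$; the exponent is a convex quadratic in $\lambda$ minimized at $\lambda = t/(k\gamma^2)$, and that choice gives exactly the claimed bound $\exp\mathopen{}\left( -\frac{t^2}{2 k \gamma^2} \right)$.

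I expect the only real obstacle to be Hoeffding's lemma itself, and within it the careful handling of the supermartingale inequality rather than an equality: one has to observe that recentring by the possibly negative conditional mean keeps the range inside an interval of width at most $2\gamma$, and that the stray factor $e^{\lambda \ev[D_i \mid \mathcal{F}_{i-1}]}$ is at most $1$ precisely because $\lambda \ge 0$. Everything else — the Chernoff set-up, the tower-property factorization, and the one-line optimization over $\lambda$ — is routine bookkeeping.
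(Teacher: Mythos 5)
Your proof is correct: the Chernoff-style set-up, the conditional Hoeffding lemma (with the recentring step that absorbs the supermartingale inequality via the factor $e^{\lambda \ev[D_i \mid \mathcal{F}_{i-1}]} \le 1$ for $\lambda \ge 0$), the tower-property iteration, and the optimization $\lambda = t/(k\gamma^2)$ all go through and yield exactly the stated bound, since an increment bounded in absolute value by $\gamma$ lies in an interval of length at most $2\gamma$, so Hoeffding's lemma gives $e^{\lambda^2(2\gamma)^2/8} = e^{\lambda^2\gamma^2/2}$ per step. Note, however, that the paper does not prove this lemma at all: it is quoted as a known special case of the Azuma--Hoeffding inequality with citations to Azuma (1967) and Hoeffding (1963), so there is no in-paper argument to compare against. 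Your write-up is simply the standard textbook proof of that cited result, and the only points worth double-checking are the ones you already flag: that the conditioning is on the values $Z_0,\dots,Z_{i-1}$ (matching the paper's definition of supermartingale), and that after recentring the mean-zero variable still sits in an interval of width at most $2\gamma$ with $a \le 0 \le b$, so that $p = -a/(b-a) \in [0,1]$ is legitimate.
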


In order to apply Lemma~\ref{lm:Azuma}, we need the following bound.

\begin{claim}\label{clm:martbound}
  Let $k$ be such that $c\log k>\mu$.
  For any $i$ it holds that $|Z_{i+1}-Z_i|<c\log k$.
  
\end{claim}

We are now ready to prove the subsequent lemma.

\begin{lemma}\label{lm:mainbound}
  Let $k$ be such that $c\log k>\mu$. There is a constants $C$ (depending on $F$, $B$, $\varepsilon$, $r$)
  such that
  \[ \prob[Z_k\ge(1+\varepsilon)rkC]\le\exp\mathopen{}\left(-\frac{k\left((1+\varepsilon)rC-\mu\right)^2}{2c^2\log^2k}\right). \]
\end{lemma}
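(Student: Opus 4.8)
The plan is to apply the Azuma--Hoeffding inequality (Lemma~\ref{lm:Azuma}) to the supermartingale $Z_0,\dots,Z_k$ produced in Lemma~\ref{lm:martingale}. The hypothesis $c\log k>\mu$ is exactly the condition needed by Claim~\ref{clm:martbound}, so $|Z_{i+1}-Z_i|<c\log k$ for every $i$, and I may invoke Lemma~\ref{lm:Azuma} with $\gamma:=c\log k$. Since $Z_0=k\mu$, the event $\{Z_k\ge(1+\varepsilon)rkC\}$ equals $\{Z_k-Z_0\ge t\}$ for
\[
  t:=(1+\varepsilon)rkC-k\mu=k\bigl((1+\varepsilon)rC-\mu\bigr),
\]
so, as long as $t>0$, Lemma~\ref{lm:Azuma} gives directly
\[
  \prob[Z_k\ge(1+\varepsilon)rkC]\le\exp\mathopen{}\left(-\frac{t^2}{2k(c\log k)^2}\right)
  =\exp\mathopen{}\left(-\frac{k\bigl((1+\varepsilon)rC-\mu\bigr)^2}{2c^2\log^2k}\right),
\]
which is exactly the claimed estimate. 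Hence the whole content of the lemma boils down to fixing the free parameters so that $t>0$, i.e.\ $(1+\varepsilon)rC>\mu$.

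To arrange this, recall that $\mu=r(C+F+B)/(1-p)$ with $p=r(C+F+B)/D<1$, so $(1+\varepsilon)rC>\mu$ is equivalent to $(1+\varepsilon)(1-p)C>C+F+B$. I would first choose $C>2(F+B)/\varepsilon$ (a constant depending only on $F$, $B$, $\varepsilon$), which already guarantees $(1+\varepsilon/2)C>C+F+B$. Then I would choose $D$ large enough — a constant depending only on $F$, $B$, $\varepsilon$, $r$, and automatically satisfying the requirement $D>r(C+F+B)$ imposed earlier — so that $p$ is small enough to ensure $(1+\varepsilon)(1-p)\ge 1+\varepsilon/2$. A one-line computation shows that $p\le\varepsilon/(2+2\varepsilon)$ suffices, i.e.\ it is enough to take $D\ge r(C+F+B)(2+2\varepsilon)/\varepsilon$ (which indeed exceeds $r(C+F+B)$). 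With these choices $(1+\varepsilon)(1-p)C\ge(1+\varepsilon/2)C>C+F+B$, hence $(1+\varepsilon)rC>\mu$ and $t>0$, so the displayed bound is valid.

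I expect the ``hard part'' here to be purely organisational rather than probabilistic: the supermartingale property (Lemma~\ref{lm:martingale}) together with the increment bound (Claim~\ref{clm:martbound}) have already done the substantive work, and what remains is to check that the constraints placed on $C$ and $D$ above are compatible with all the other constraints these parameters must meet in the rest of the argument (the relations $D>r(C+F+B)$, $p<1$, and $\mu=r(C+F+B)/(1-p)$), and that $C$ and $D$ can be taken to depend only on $F$, $B$, $\varepsilon$, $r$, as the statement requires. Once that compatibility is verified, substituting $t=k\bigl((1+\varepsilon)rC-\mu\bigr)$ and $\gamma=c\log k$ into Lemma~\ref{lm:Azuma} and simplifying $t^2/(2k\gamma^2)$ completes the proof.
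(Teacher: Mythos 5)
Your proof is correct and follows essentially the same route as the paper: invoke Lemma~\ref{lm:Azuma} with $\gamma=c\log k$ (justified by Claim~\ref{clm:martbound} under the hypothesis $c\log k>\mu$), substitute $t=k\bigl((1+\varepsilon)rC-\mu\bigr)$, and then pick $C$ and $D$ (constants depending only on $F,B,\varepsilon,r$) so that $t>0$. The only difference is cosmetic: the paper takes $C>(F+B)/\varepsilon$ and a somewhat more opaque lower bound on $D$, whereas you take the slightly stronger $C>2(F+B)/\varepsilon$ together with the cleaner condition $p\le\varepsilon/(2+2\varepsilon)$; both choices are compatible with $D>r(C+F+B)$ and yield $(1+\varepsilon)rC>\mu$, so the two arguments are equivalent.
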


\begin{proof}
  Applying  Lemma~\ref{lm:Azuma} for any positive $t$, we get
  \[ \prob[Z_k-Z_0\ge t]\le\exp\mathopen{}\left(-\frac{t^2}{2kc^2\log^2k}\right). \]
  Noting that $Z_0=k\mu$, and choosing
  \[ t:=k((1+\varepsilon)rC-\mu) \]
  the statement follows. The only remaining task is to verify that $t>0$, \ie,
  that there is a constant $D$ such that
  \[ (1+\varepsilon)rC>r(C+F+B)\frac{1}{1-\frac{r(C+F+B)}{D}}. \]
  Let us choose $C$ such that $C>\frac{F+B}{\varepsilon}$. Then $(1+\varepsilon)C>C+F+B$, and it is
  possible to choose $D$ such that both $D>r(C+B+F)$ as required, and
  \[ D>\frac{(1+\varepsilon)r^2C(C+B+F)}{r((1+\varepsilon)C-(C+B+F))}. \]
  Thus, we have
  \[ rD(1+\varepsilon)C-rD(C+B+F)>(1+\varepsilon)r^2C(C+B+F) \]
  and therefore
  \[ (1+\varepsilon)rC(D-r(C+B+F))>rD(C+B+F) \]
  and the claim follows.
  
\end{proof}

To get to the statement of the main theorem, we show the following technical bound.

\begin{lemma}\label{lm:expprob}
  For any $c$, and $\beta>1$ there is a $k_0$ such that for any $k>k_0$
  \[ \exp\mathopen{}\left(-\frac{k\left((1+\varepsilon)rC-\mu\right)^2}{2c^2\log^2k}\right) \le \frac{1}{2(2+kC)^\beta}. \]
\end{lemma}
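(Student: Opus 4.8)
The plan is to take logarithms of both sides and reduce the claim to the elementary fact that $k/\log^2 k$ grows faster than any constant multiple of $\log k$. Set $a := \big((1+\varepsilon)rC-\mu\big)^2/(2c^2)$. By the choice of $C$ and $D$ carried out in the proof of Lemma~\ref{lm:mainbound} we have $(1+\varepsilon)rC-\mu>0$, so $a$ is a strictly positive constant depending only on $\varepsilon,r,C,\mu,c$. Taking $-\log$ of both sides, the asserted inequality is equivalent to
\[
  \frac{a\,k}{\log^2 k}\;\ge\;\log 2+\beta\log(2+kC),
\]
and this is what I would aim to establish for all sufficiently large $k$.

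Next I would bound the right-hand side crudely. For $k\ge 1$ we have $2+kC\le(2+C)k$, hence the right-hand side is at most $b+\beta\log k$, where $b:=\log 2+\beta\log(2+C)$ is a constant depending only on $\beta$ and $C$. It therefore suffices to produce a $k_0$ such that $a\,k/\log^2 k\ge b+\beta\log k$ for all $k>k_0$.

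Finally I would split this into two contributions. Since $\lim_{k\to\infty}k/\log^3 k=\infty$, there is a $k_1$ with $a\,k/\log^3 k\ge 2\beta$, equivalently $a\,k/\log^2 k\ge 2\beta\log k$, for all $k\ge k_1$; and for every $k\ge\exp(b/\beta)$ we have $\beta\log k\ge b$, hence $2\beta\log k\ge b+\beta\log k$. Taking $k_0:=\max\{2,\,k_1,\,\lceil\exp(b/\beta)\rceil\}$ (the constant $2$ only ensuring $\log k>0$ so the divisions make sense) gives the desired inequality for all $k>k_0$.

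There is essentially no obstacle here: the statement is a routine asymptotic comparison between a near-linear function and a logarithmic one. The only care needed is bookkeeping of dependencies — $k_0$ is allowed to depend on all of $\beta,\varepsilon,r,C,\mu,c$, which is harmless since the lemma only asserts the existence of such a $k_0$.
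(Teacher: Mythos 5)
Your proof is correct and follows the same route as the paper's: the paper simply observes that the left-hand side is $\exp(-\eta k/\log^2 k)$ for a positive constant $\eta$ and then asserts that for large enough $k$ this beats $2(2+kC)^\beta$, which is exactly the asymptotic comparison you carry out explicitly. You have merely filled in the elementary bookkeeping (taking logarithms, bounding $2+kC\le(2+C)k$, and splitting off the $\log k$ term) that the paper leaves to the reader.
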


\begin{proof}
  Note that the left-hand side is of the form 
  $\exp\mathopen{}\left(-\eta\frac{k}{\log^2k}\right)$
  for some positive constant $\eta$. Clearly, for any $\beta>1$ and large enough $k$, it holds that
  $\exp\mathopen{}\left(\eta\frac{k}{\log^2k}\right)\ge2(2+kC)^\beta$.
  
\end{proof}

Combining Lemmata~\ref{lm:mainbound} and~\ref{lm:expprob}, we get the following result.

\begin{corollary}
  \label{crlr:probZk}
  There is a constant $C$ (depending on $F$, $B$, $\varepsilon$, $r$)
  such that for any $\beta>1$ there is a $k_0$ such that for any $k>k_0$ we have
  \[ \prob[Z_k\ge(1+\varepsilon)rkC]\le\frac{1}{2(2+kC)^\beta}. \]
\end{corollary}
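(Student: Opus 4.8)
The plan is to simply chain the two preceding lemmas; the corollary is essentially a bookkeeping step. First I would fix the clipping constant $c$ once and for all (recall it is the constant appearing in the definition of $\bW_i$; for the purpose of this corollary any fixed positive value works). With $c$ fixed, Lemma~\ref{lm:mainbound} supplies a constant $C$, depending only on $F$, $B$, $\varepsilon$, and $r$, together with the bound
\[ \prob[Z_k\ge(1+\varepsilon)rkC]\le\exp\mathopen{}\left(-\frac{k\left((1+\varepsilon)rC-\mu\right)^2}{2c^2\log^2k}\right), \]
valid for every $k$ with $c\log k>\mu$. Note that this $C$ is independent of $\beta$, which matches the order of quantifiers in the statement to be proved.

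Next, given $\beta>1$, I would apply Lemma~\ref{lm:expprob} with this same $c$ to obtain a threshold $k_0'$ beyond which the right-hand side above is at most $1/(2(2+kC)^\beta)$. Setting $k_0:=\max\{k_0',k_1\}$, where $k_1$ is any integer such that $c\log k>\mu$ holds for all $k>k_1$, guarantees that for every $k>k_0$ both the side condition needed to invoke Lemma~\ref{lm:mainbound} and the conclusion of Lemma~\ref{lm:expprob} are in force. Composing the two inequalities then gives
\[ \prob[Z_k\ge(1+\varepsilon)rkC]\le\exp\mathopen{}\left(-\frac{k\left((1+\varepsilon)rC-\mu\right)^2}{2c^2\log^2k}\right)\le\frac{1}{2(2+kC)^\beta}, \]
which is exactly the claimed bound.

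There is no genuine obstacle here: the only point that deserves a moment's attention is that the single threshold $k_0$ must absorb two separate ``$k$ large enough'' requirements — the hypothesis $c\log k>\mu$ of Lemma~\ref{lm:mainbound} and the $k_0$ produced by Lemma~\ref{lm:expprob} — but since both are lower bounds on $k$, taking their maximum suffices, and no interaction between the two needs to be analysed.
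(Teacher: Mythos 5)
Your argument is correct and matches the paper's own proof, which simply combines Lemma~\ref{lm:mainbound} with Lemma~\ref{lm:expprob} and absorbs both large-$k$ thresholds into a single $k_0$. One small caution: the clipping constant $c$ is not truly fixed ``once and for all'' in the overall proof --- Lemma~\ref{lm:Zcost} later chooses $c$ as a function of $\beta$ --- so it is more accurate to say the corollary holds for every choice of $c>0$, with $k_0$ depending on $c$ and $\beta$.
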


In order to finish the proof of the main theorem we show that w.h.p., $Z_k$ is actually
the cost of the algorithm $\algA'$. 

\begin{lemma}
  \label{lm:Zcost}
  For any $\beta>1$ there is a $c$ and a $k_1$ such that for any $k>k_1$
  \[ \prob[Z_k\not=\cost{\algA'}]\le\frac{1}{2(2+kC)^\beta}. \]
\end{lemma}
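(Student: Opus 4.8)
The plan is to notice that $Z_k$ and the actual cost of $\algA'$ can disagree only because of the clipping to $c\log k$ in the definition of $\bW_i$, and then to show via Lemma~\ref{lm:Xprob} that a phase whose cost gets clipped is a rare event. By partitionability, the cost of $\algA'$ on $(I,x)$ equals $\sum_{j=1}^{k}W(n_j,n_{j+1}-1)$, whereas $Z_k=\sum_{j=1}^{k}\bW_j$ with $\bW_j=\min\{W(n_j,n_{j+1}-1),c\log k\}$. Since $\bW_j\le W(n_j,n_{j+1}-1)$ pointwise, with equality exactly when $W(n_j,n_{j+1}-1)\le c\log k$, the event $\{Z_k\ne\cost{\algA'}\}$ is contained in $\bigcup_{j=1}^{k}\{W(n_j,n_{j+1}-1)>c\log k\}$, so by the union bound
\[ \prob[Z_k\ne\cost{\algA'}]\le\sum_{j=1}^{k}\prob[W(n_j,n_{j+1}-1)>c\log k]. \]

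To bound a single term, recall that every subphase has cost at most $D+F$, so a phase whose cost exceeds $c\log k$ must consist of at least $\delta:=\lceil c\log k/(D+F)\rceil$ subphases; hence $\prob[W(n_j,n_{j+1}-1)>c\log k]\le\prob[X_j\ge\delta]$. Lemma~\ref{lm:Xprob} gives $\prob[X_j\ge\delta\mid S_{n_j}=s]\le p^{\delta-1}$ for every state $s$, and averaging over the distribution of $S_{n_j}$ removes the conditioning, so $\prob[X_j\ge\delta]\le p^{\delta-1}$. Taking $\log$ to be the natural logarithm, $p^{\delta-1}\le p^{-1}p^{\,c\log k/(D+F)}=p^{-1}k^{-c'}$ with $c':=-\frac{c\log p}{D+F}>0$ (a different logarithm base only rescales $c'$). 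Summing over the $k$ phases gives
\[ \prob[Z_k\ne\cost{\algA'}]\le p^{-1}k^{\,1-c'}. \]

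It remains to fix the free constant $c$. Since $c'$ is linear in $c$ while $D$, $F$, and $C$ are already fixed, we may choose $c=c(\beta)$ large enough that $c'>\beta+1$. Using $2+kC\le(2+C)k$ for $k\ge1$, it then suffices that $2p^{-1}(2+C)^\beta\,k^{\,1-c'+\beta}\le1$, which holds for all $k$ beyond some threshold $k_1$ because the exponent $1-c'+\beta$ is negative; this yields the stated bound. One should also check that enlarging $c$ does not spoil the earlier statements: Lemmata~\ref{lm:martingale}--\ref{lm:expprob} and Corollary~\ref{crlr:probZk} need only $c\log k>\mu$ (true for large $k$ whatever $c$) and the positivity of the exponent in Lemma~\ref{lm:mainbound} (which holds for every $c$), so it is enough to take $k_1$ large enough to absorb those constraints as well.

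I expect the only real difficulty to be organizational rather than probabilistic: one must make sure that $c$ can legitimately be chosen as a function of $\beta$ without disturbing the previously fixed constants $C,D,F,B,r,\varepsilon$, and one must correctly translate ``cost more than $c\log k$'' into ``at least $\delta$ subphases.'' The latter translation is exactly where request-boundedness enters, through the per-subphase bound $D+F$, so this argument lives in the restricted setting and will have to be revisited once that assumption is dropped.
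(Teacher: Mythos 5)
Your proposal is correct and follows essentially the same route as the paper: decompose the event $\{Z_k\ne\cost{\algA'}\}$ into per-phase clipping events, translate ``cost $>c\log k$'' into ``at least $\lceil c\log k/(D+F)\rceil$ subphases'' via request-boundedness, invoke Lemma~\ref{lm:Xprob}, and close with a union bound. The only cosmetic difference is in fixing $c$: the paper works with the explicit function $g(k)=\log\bigl(\tfrac{2k}{p}(2+kC)^\beta\bigr)/\log k$ and its limit $1+\beta$, whereas you rewrite $p^{\delta-1}$ as $p^{-1}k^{-c'}$ and simply pick $c$ so that $c'>\beta+1$; both yield the same threshold behavior, and your closing remark that $c$ may depend on $\beta$ without disturbing the earlier lemmas is consistent with the paper, which states those lemmas for arbitrary $c>0$.
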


\begin{proof}
  Since $Z_k=\sum_{j=1}^k\min\{W(n_j,n_{j+1}-1),c\log k\}$ the event that $Z_k\not=\cost{\algA'}$
  happens exactly when there exists some $j$ such that $W(n_j,n_{j+1}-1)>c\log k$.

  Consider any fixed $j$. Since the cost of a subphase is at most $D+F$, it holds that
  $W(n_j,n_{j+1}-1)\le X_j(F+D)$.  From 
  Lemma~\ref{lm:Xprob} it follows that for any $c$,
  \[ \prob[W(n_j,n_{j+1}-1)>c\log k]\le\prob\left[X_j\ge\left\lceil\frac{c\log k}{F+D}\right\rceil\right]
     \le p^{\frac{c\log k}{F+D}-1}. \]
  Consider the function
  \[ g(k):=\frac{\log\mathopen{}\left(\frac{2k}{p}(2+kC)^\beta\right)}{\log k}. \]
  It is decreasing, and $\lim_{k\mapsto\infty}g(k)=1+\beta$. Hence, it is possible to find a constant $c$,
  and a $k_1$ such that for any $k>k_1$ it holds that
  \[ c\ge\frac{F+D}{\log\mathopen{}\left(\frac{1}{p}\right)}\cdot g(k). \]
  From that it follows that
  \[ \frac{\log\mathopen{}\left(\frac{1}{p}\right)c\log k}{F+D}\ge
     \log\mathopen{}\left(\frac{1}{p}\right)+\log\mathopen{}\left(2k(2+kC)^\beta\right) \]
  and
  \[ \log\mathopen{}\left(\frac{1}{p}\right)\left(\frac{c\log k}{F+D}-1\right)\ge\log\mathopen{}\left(2k(2+kC)^\beta\right), \]
  \ie,
  \[ \left(\frac{1}{p}\right)^{\frac{c\log k}{F+D}-1}\ge 2k(2+kC)^\beta. \]
  Thus, for this choice of $c$ and $k_1$, it holds that
  \[ \prob[W(n_j,n_{j+1}-1)>c\log k]\le p^{\frac{c\log k}{F+D}-1}\le\frac{1}{2k(2+kC)^\beta}. \]
  Using the union bound, we conclude that the probability that the cost of any phase exceeds $c\log k$ is at most
  $1/(2(2+kC)^\beta)$.
  
\end{proof}

Using the union bound, combining Lemma~\ref{lm:Zcost} and Corollary~\ref{crlr:probZk}, and
noting that the cost of the optimum is at most $kC$, we get the following statement.

\begin{corollary}
  There is a constant $C$ such that 
  for any $\beta>1$ there a $k_2$ such that for any $k>k_2$ it holds
  \[ \prob[\cost{\algA'}\ge(1+\varepsilon)r\opt{}]\le\frac{1}{(2+kC)^\beta}.\]
\end{corollary}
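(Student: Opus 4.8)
The plan is to express the ``bad'' event $\{\cost{\algA'}\ge(1+\varepsilon)r\opt{}\}$ as a subevent of a union of two events that are already controlled, and then apply the union bound. Concretely, if $Z_k=\cost{\algA'}$ and $Z_k<(1+\varepsilon)r\opt{}$ both hold, then $\cost{\algA'}=Z_k<(1+\varepsilon)r\opt{}$, so $\algA'$ meets the target ratio on the given input; taking the contrapositive,
\[ \{\cost{\algA'}\ge(1+\varepsilon)r\opt{}\}\ \subseteq\ \{Z_k\ne\cost{\algA'}\}\ \cup\ \{Z_k\ge(1+\varepsilon)r\opt{}\}. \]
Hence it suffices to bound each of the two events on the right by $\frac{1}{2(2+kC)^\beta}$; their union then yields the claimed $\frac{1}{(2+kC)^\beta}$.

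First I would fix $\beta>1$ and take the constant $c$ and the threshold $k_1$ furnished by Lemma~\ref{lm:Zcost}, which directly gives $\prob[Z_k\ne\cost{\algA'}]\le\frac{1}{2(2+kC)^\beta}$ for all $k>k_1$. For the second event I would use that, by the phase partition, the optimal cost of $(I,x)$ lies between $(k-1)C$ and $kC$, so $\opt{}\ge(k-1)C$ and therefore $\{Z_k\ge(1+\varepsilon)r\opt{}\}\subseteq\{Z_k\ge(1+\varepsilon)r(k-1)C\}$. I would then bound $\prob[Z_k\ge(1+\varepsilon)r(k-1)C]$ exactly as in the proofs of Lemma~\ref{lm:mainbound} and Lemma~\ref{lm:expprob}, but with the drift $t:=(1+\varepsilon)r(k-1)C-Z_0=k\bigl((1+\varepsilon)rC-\mu\bigr)-(1+\varepsilon)rC$; since $(1+\varepsilon)rC>\mu$ by the choice of $C$ and $D$, this $t$ differs from the one used in Lemma~\ref{lm:mainbound} only by an additive constant, so it stays positive and of order $k$ once $k$ exceeds some $k_0'$, and Lemma~\ref{lm:Azuma} together with the estimate of Lemma~\ref{lm:expprob} gives $\prob[Z_k\ge(1+\varepsilon)r(k-1)C]\le\frac{1}{2(2+kC)^\beta}$ for all $k$ beyond some $k_0$ depending on $\beta$. (Alternatively, one could invoke Corollary~\ref{crlr:probZk} verbatim with the internal parameter $\varepsilon/2$ and note that $(1+\varepsilon/2)kC\le(1+\varepsilon)(k-1)C\le(1+\varepsilon)\opt{}$ as soon as $k\ge 2(1+\varepsilon)/\varepsilon$.)

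Finally I would set $k_2:=\max\{k_0,k_0',k_1\}$, also large enough that $c\log k>\mu$ there (as required by Claim~\ref{clm:martbound} and Lemma~\ref{lm:mainbound}); then for every $k>k_2$ both halves of the bound hold and the union bound finishes the proof. I expect the only genuine subtlety to be exactly this mismatch between the threshold $(1+\varepsilon)rkC$ in the earlier lemmas and the threshold $(1+\varepsilon)r\opt{}$ demanded by the statement: because $\opt{}$ can be as small as $(k-1)C$ one cannot simply quote Corollary~\ref{crlr:probZk}, but the discrepancy is only the multiplicative factor $k/(k-1)\to 1$, which is absorbed harmlessly either by enlarging $k_2$ or by shrinking the internal $\varepsilon$. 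Everything else is routine: the union bound itself, plus a bookkeeping check that the finitely many ``$k$ large enough'' thresholds and the single constant $c$ (together with the resulting $C$, $D$, $\mu$, $p$) can be fixed consistently; this is still carried out under the request-boundedness assumption in force throughout the argument so far.
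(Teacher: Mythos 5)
Your proof is correct and follows essentially the same route as the paper: the paper also obtains the corollary by a union bound over the event $\{Z_k\ne\cost{\algA'}\}$ (Lemma~\ref{lm:Zcost}) and the Azuma-based tail event (Corollary~\ref{crlr:probZk}). Where you go beyond the paper is exactly the subtlety you flag. The paper's one-sentence justification cites Corollary~\ref{crlr:probZk} and ``notes that the cost of the optimum is at most $kC$,'' but the upper bound $\opt{}\le kC$ only helps to compare the error term $(2+kC)^{-\beta}$ with $(2+\opt{})^{-\beta}$ in the step that follows; for the threshold itself one needs the lower bound $\opt{}\ge(k-1)C$, because $\{Z_k\ge(1+\varepsilon)r\opt{}\}$ can be strictly larger than $\{Z_k\ge(1+\varepsilon)rkC\}$, so Corollary~\ref{crlr:probZk} cannot be quoted verbatim. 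Your fix --- rerunning the estimates of Lemmas~\ref{lm:mainbound} and~\ref{lm:expprob} with the shifted drift $t=k\bigl((1+\varepsilon)rC-\mu\bigr)-(1+\varepsilon)rC$, which stays positive and of order $k$ since $(1+\varepsilon)rC>\mu$, or alternatively invoking Corollary~\ref{crlr:probZk} with internal parameter $\varepsilon/2$ once $k\ge 2(1+\varepsilon)/\varepsilon$ --- is precisely what is needed to prove the corollary as stated, i.e.\ without the additive $\alpha$ that the paper only introduces in the concluding paragraph and which would otherwise absorb the constant discrepancy. The price is merely a larger $k_2$ (and, in the alternative route, constants $C,D$ chosen for $\varepsilon/2$); apart from this extra care the two arguments coincide.
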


To conclude the proof by showing that for any $\beta>1$ there is some $\alpha$ such that 
\[ \prob\left[\cost{\algA'}>(1+\varepsilon)r\opt{}
    + \alpha\right]\le\frac{1}{(2+kC)^\beta} \]
holds for all $k$, we have to choose $\alpha$ large enough to cover the cases
of $k<k_2$. For these cases, $\opt{}<k_2C$, and hence the
expected cost of \algA is at most $rk_2C$, and due to Lemma~\ref{lm:Wexp},
the expected cost of $\algA'$ is constant.  The right-hand side
$(2+kC)^{-\beta}$ is decreasing in $k$, so it is at least
$(2+k_2C)^{-\beta}$, which is again a constant.  From Markov's inequality
it follows that there exists a constant $\alpha$ such that 
\[ \prob\left[\cost{\algA'}>\alpha\right]<\frac{1}{(2+k_2C)^\beta} \]
finishing the proof of the restricted setting.

\subsection{Avoiding Request-Boundedness}

All that is left to do is to show how to handle problems that are not
request-bounded. The main idea is to apply the restricted Theorem~\ref{thm:exptohp} to a modified
request-bounded version of the given problem. We then have to show that there is
a modified version of the algorithm such that the computed solution has
an expected  competitive ratio close to the original one for the modified
problem. By ensuring that \emph{any} solution to the modified problem translates
to a solution of the original problem with at most the same competitive ratio,
it is enough to apply our theorem to the modified problem to obtain an analogous
result for the original problem.

Let $P$ be an opt-bounded symmetric problem; then $P$ is described entirely by
the feasible request-answer pairs (depending on the states), by its set of
states $\mathcal{S}$, and by costs of all request-answer pairs for all states.
Note that an expected $r$-competitive online algorithm \algA for $P$ has to have
an expected competitive ratio of $r$ for every request-answer pair.

Let $\cost[P]{s,\reqx,\ansy}$ denote the cost to give $\ansy$ as answer on
request $\reqx$ when in state $s$ of the problem $P$.
Let $\mathcal{Y}$ be the set of all possible
answers. Then we define the \emph{$(\sigma,\tau)$-truncated} version $P'$ of $P$ as follows.
Let $s$ be a state and $\reqx$ be a request; we set 
\[ m(s,\reqx) := \min_{\ansy \in \mathcal{Y}}\{\cost[P]{s,\reqx,\ansy}\}, \]
\ie, the minimal cost to answer $\reqx$ when in state $s$. In $P'$ we assign the cost
$ \cost[P']{s,\reqx,\ansy} =  \cost[P]{s,\reqx,\ansy}$, if $m(s,\reqx) \le \sigma$ and 
$ \cost[P']{s,\reqx,\ansy} =  \cost[P]{s,\reqx,\ansy} - m(s,\reqx) + \sigma$ otherwise.
We define all request-answer pairs of $P'$ such that $\cost[P']{s,\reqx,\ansy} > \tau$ to 
have a cost of $\infty$.
Both $P$ and $P'$ have the same remaining feasible
request-answer pairs for each state.

Note that any algorithm that gives an answer of cost $\infty$ with nonzero
probability cannot be competitive and that due to the modifications of the cost
function, some distinct states of $P$ may become a single state of $P'$. We will
abuse notation and ignore this fact because it does not change the proof. Thus
we assume that both problems have the same set of states.

We continue with some insights that help us to choose useful values for $\sigma$
and $\tau$.

\begin{claim}\label{claim:bounds}
  Given an expected $r$-competitive algorithm \algA for $P$, for any $\delta > 0$ there
  is a $(r+\delta)$-competitive online algorithm \algC for $P$ such that the cost
  $\cost[P]{s,\reqx,\ansy}$ for any $\ansy$ provided by \algC is at most
  $\delta^{-1}(\alpha + r \cdot (m(s,\reqx)+B))$. Furthermore, if $m(s,\reqx) \ge
  \frac{r}{r-1}B$, \algC may ignore the destination state and give a minimum cost
  answer greedily. 
\end{claim}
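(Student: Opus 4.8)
The plan is to build the competitive algorithm \algC directly from \algA by adding a \emph{greedy reset rule}: whenever \algC has accumulated cost exceeding the claimed threshold since its last reset, it performs a reset (which is possible since $P$ is symmetric) and continues simulating a fresh copy of \algA from the resulting state. The intuition is that the only way \algA, as a subroutine, produces an answer of huge cost is that it has, on the current suffix, deviated badly from optimum; Markov's inequality on the suffix cost lets us bound how often this happens, and opt-boundedness (the constant $B$) controls how much a reset hurts the future optimum.

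First I would fix $\delta>0$ and set the reset threshold to $\theta:=\delta^{-1}(\alpha+r(m(s,\reqx)+B))$ relative to the state $s$ and request $\reqx$ at which the last reset occurred — more precisely, I would phrase the argument per request-answer pair, exploiting the remark preceding the claim that an expected $r$-competitive algorithm is expected $r$-competitive on every request-answer pair. For a single request $\reqx$ in state $s$: the cheapest feasible answer costs $m(s,\reqx)$, so after a reset the optimal cost to serve just $\reqx$ is at most $m(s,\reqx)+B$ by opt-boundedness, hence \algA's expected cost on it is at most $r(m(s,\reqx)+B)+\alpha$. By Markov, the probability that \algC's answer exceeds $\theta=\delta^{-1}(\alpha+r(m(s,\reqx)+B))$ is at most $\delta/(\,\cdot\,)$, small; when it does exceed, \algC resets and retries, and the geometric tail of repeated failures is summable. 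Summing the expected cost contributions over the (geometric number of) retries gives an expected cost on that request-answer pair bounded by $r(m(s,\reqx)+B)+\alpha$ times a factor $1/(1-\delta')$ for a suitable $\delta'$, which is $\le (r+\delta)\opt{}+\alpha'$ after reabsorbing constants; crucially every \emph{realized} answer of \algC has cost at most $\theta=\delta^{-1}(\alpha+r(m(s,\reqx)+B))$ by construction of the reset rule, which is exactly the bound asserted.

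For the last sentence of the claim, I would observe that once $m(s,\reqx)\ge\frac{r}{r-1}B$, we have $r(m(s,\reqx)+B)\le r\cdot m(s,\reqx)+rB\le r\cdot m(s,\reqx)+ (r-1)m(s,\reqx)$ — wait, that is not quite the bound we want; instead the point is that the greedy answer of cost $m(s,\reqx)$ already satisfies $m(s,\reqx)\le \frac{r}{?}(m(s,\reqx)-B)\le$ (something) $\cdot$ (optimal cost from any state), because the optimal cost from the true destination state is at least $m(s,\reqx)-B$, and $m(s,\reqx)\le \tfrac{r}{r-1}\cdot$ wait. The clean way: the optimal cost to serve $\reqx$ from the correct state is $\ge m(s,\reqx)-B \ge m(s,\reqx)(1-\tfrac{r-1}{r}) = m(s,\reqx)/r$, so paying $m(s,\reqx)$ greedily is within factor $r$ of optimum — hence \algC need not simulate \algA at all on such requests and may simply play greedily, still maintaining the overall $(r+\delta)$-competitiveness when this is combined with the above argument on the remaining (cheap) requests.

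The main obstacle I anticipate is bookkeeping the interaction between resets and the additive constant $\alpha$: a reset is triggered by cost on a \emph{suffix} since the last reset, but competitiveness of \algA is an additive-$\alpha$ statement, so naively each retry can "spend" another $\alpha$, and one must argue the number of retries within one request-answer pair is $O(1)$ in expectation (geometric with a constant success probability bounded away from $0$, by choosing $\theta$ a large enough multiple of $\alpha+r(m(s,\reqx)+B)$) so that the $\alpha$'s do not accumulate unboundedly. A secondary subtlety is that after merging states of $P$ into states of $P'$ — flagged in the paragraph before the claim — the constant $B$ of opt-boundedness must still be a valid bound, which I would note follows because merging states only identifies triples that were already equivalent for the purpose of future optimal cost up to the cost-shift, and the shift is itself bounded by $\sigma$; but since the claim is stated purely about $P$ (not $P'$), this subtlety can be deferred to where $\sigma,\tau$ are chosen. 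The rest is the routine Markov-plus-geometric-series computation sketched above.
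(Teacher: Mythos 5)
Your key ingredients match the paper's --- the threshold $\kappa=\delta^{-1}(\alpha+r(m(s,\reqx)+B))$ obtained from a Markov bound on the single-request expectation (using opt-boundedness to control the optimum from any state), and the threshold $\tfrac{r}{r-1}B$ for when greedy is safe --- but the two mechanisms you use to turn them into a proof are off in ways that matter.

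First part: the ``reset and retry'' picture is not coherent in an online setting. Once \algC has committed an answer, the cost is incurred and cannot be taken back; resetting afterwards (and certainly not re-sampling \algA and ``retrying the request'') does not cap the realized per-answer cost, so your claim that ``every realized answer of \algC has cost at most $\theta$ by construction of the reset rule'' does not actually follow from a reset-on-accumulated-cost rule. The paper's construction, visible in the paragraph after the claim where \algB is defined, is a one-shot \emph{substitution}: \algC simulates \algA, and if \algA would give an answer of cost $>\kappa$ then \algC instead plays the greedy answer and resets from there. That instantly caps the per-answer cost (the greedy answer costs $m(s,\reqx)\le\kappa$), and the only thing left to control is the \emph{future} penalty of deviating to the greedy state, which opt-boundedness bounds by $B$ and which occurs with probability at most $\delta$; the geometric-series accounting you mention is not needed and is an artifact of the retry framing. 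Relatedly, your worry about $\alpha$ ``accumulating over retries'' is also an artifact of this framing --- there is a single deviation event per answer, not a geometric sequence of them.

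Second part: your cleaned-up argument bounds only the per-answer ratio (greedy cost $m(s,\reqx)$ versus optimal answer cost $\ge m(s,\reqx)-B\ge m(s,\reqx)/r$). This is not by itself a competitiveness statement, because after deviating to the greedy destination state $s''$ the algorithm and the optimum are on different trajectories, so the future per-answer optima differ. The paper's proof argues directly about the total: it compares $m(s,\reqx)+r\cdot\optval{s''}$ (greedy step then $r$-competitive play from $s''$) against $r\cdot\optval{s}\ge r\cdot(m(s,\reqx)+\optval{s''}-B)$, and the condition $m(s,\reqx)\ge\frac{r}{r-1}B$ is exactly what makes the former at most the latter. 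You need this forward-looking comparison; the pointwise ratio bound on the single answer does not substitute for it. If you keep your Markov threshold and swap in the paper's ``substitute greedy and account for the future via opt-boundedness'' mechanism in both places, the proof goes through.
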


\begin{proof}
  Let $s'$ be the state selected after $s$ by an optimal solution and
  let $s''$ be the state when giving a greedy answer of cost $m(s,\reqx)$. 
  Let \optval{s}, \optval{s'}, and \optval{s''} be
  the costs of the respective optimal solutions when starting from $s$, $s'$, or
  $s''$.
   
  We first note that the optimal answer that leads from $s$ to $s'$ can
  have a cost of at most $m(s,\reqx)+B$ as otherwise, by the opt-boundedness,
  choosing greedily and moving to $s''$ would be a better solution.
  
  The sum of probabilities of \algA to select an answer of cost at least $\kappa$ is at
  most $(\alpha + r \cdot (m(s,\reqx)+B))/\kappa$ where
  the parameter $\alpha$ is due to the definition of the competitive ratio. 
  Otherwise the expected value
  would be too high if the adversary chooses to only send a single request. We
  set $\kappa = \delta^{-1}(\alpha + r \cdot (m(s,\reqx)+B))$
  to satisfy the $\delta$-closeness to the expected competitiveness. 
  We now show how to handle large values of $m(s,\reqx)$.
  To be $r$-competitive, we can afford a cost of
  \[
    r \cdot \optval{s} \ge r \cdot (m(s,\reqx) + \optval{s'}) \ge r \cdot
    (m(s,\reqx) + \optval{s''}-B).
  \]
  If we choose the first answer greedily and apply \algA for all remaining
  requests, the expected cost of the solution is at most
  \[
  m(s,\reqx) + r \cdot \optval{s''} \le m(s,\reqx) + r \cdot (\optval{s''}-B) + r\cdot B.
  \]
  Therefore, if $m(s,\reqx) \ge \frac{r}{r-1}B$, the modified solution is
  $r$-competitive.
  
\end{proof}

The claim suggests to set $\sigma = \frac{r}{r-1}B$ and 
$\tau = 2\varepsilon^{-1}(\alpha + r \cdot (\frac{r}{r-1}B + B))$, where we
chose $\delta = \varepsilon/2$. From now on $P'$ is the
$(\sigma,\tau)$-truncated version of $P$ with these values of $\sigma$ and
$\tau$.

As before, let \algA be an online algorithm for $P$ that computes a solution
with expected competitive ratio at most $r$. We design an algorithm \algB for $P'$ as follows. 
Suppose in state $s$ of $P'$, the adversary requests $\reqx$.
Then \algB simulates \algA in state $s$ on $\reqx$ within $P$. If $m(s,\reqx) \le \sigma$
and the answer $\ansy$ has a cost
smaller than $\tau$, the answer of $A'$ is $\ansy$. Otherwise \algB ignores the
answer of \algA and answers greedily while ignoring the destination state, and performing a reset subsequently.

It is clear that all answers of \algB are feasible for $P'$.
We first show that the expected competitive ratio of \algB for $P'$ is at most
$r + \varepsilon/2$. For each round with $m(s,x) \le \sigma$, the claim follows
directly from Claim~\ref{claim:bounds} using that any answer in $P$ with cost
higher than $\tau$ neither affects an optimal answer nor the algorithm's answer
due to the claim. Otherwise, if $m(s,x) > \sigma$, the competitive ratio of the
greedy answer is at most $r$, using the same argumentation as in the proof of
the second part of Claim~\ref{claim:bounds}.

To summarize, $P'$ is a symmetric, opt-bounded, and request-bounded problem and
\algB is an expected $(r+\varepsilon/2)$-competitive algorithm for $P'$.
Therefore, we can apply the restricted Theorem~\ref{thm:exptohp} as proven in
the last section with an error of $\varepsilon/2$ and with \algB to show that
there is an algorithm $\algA'$ that is $(r + \varepsilon)$-competitive for $P'$
w.h.p.

Finally we show that the competitive ratio in $P$ for any sequence of answers on
any request string cannot be larger than the competitive ratio of the same
sequence in $P'$.

Observe that a string of answers is optimal for $P$ if and only if it is optimal
for $P'$. Due to the opt-boundedness, an optimal solution cannot have any answer
on request $\reqx$ from state $s$ that has a cost larger than $m(s,x)+B$ in $P$ or
larger than $\sigma + B$ in $P'$. Therefore the parameter $\tau$ does not
influence any optimal solution in $P'$ and it cannot be an advantage to give an answer in
$P$ that is set to a cost of $\infty$ in $P'$. In each time step, the difference of the
cost of any answer $\ansy$ in $P$ and $P'$ given any state $s$ and request $\reqx$ is fixed to
exactly $m(s,\reqx) - \sigma$ as long as the answer has finite cost.  Thus, any improvement
of the answer sequence in one of the problems translates to an improvement in the other
one.

Let $z = z_1,z_2,\dotsc,z_k$ be an optimal sequence of answers and
$s'_1,s'_2,\dotsc,s'_k$ be the corresponding sequence of states.
Then it is
sufficient to show that for each $i$, the competitive ratio of $\algA'$ for $P$
is at most as high as the competitive ratio for $P'$. For any $i$, let us fix a state $s$ and
a request $\reqx$.
Let $\ansy$ be the answer given by $\algA'$. Then the
competitive ratio in $P'$ is $\cost[P']{s,\reqx,\ansy}/\cost[P']{s'_i,\reqx,z_i}$.
If $m(s,x) \le \sigma$, the cost of both the optimal answer and the algorithmic answer,
and therefore also the ratio, is identical in $P$ and $P'$. Otherwise, the ratio in
$P$ is 
\begin{align*}
& \cost[P]{s,\reqx,\ansy}/\cost[P]{s'_i,\reqx,z_i} \\
& = (\cost[P']{s,\reqx,\ansy} + m(s,x) - \sigma)/(\cost[P']{s'_i,\reqx,z_i} + m(s,x) - \sigma)\\
& \le \cost[P']{s,\reqx,\ansy}/\cost[P']{s'_i,\reqx,z_i},
\end{align*}
where the last inequality uses that any competitive ratio is at least one.

\section{Applications}\label{sec:applications}

We now discuss the impact of Theorem~\ref{thm:exptohp} on task systems, the
$k$-server problem, and paging.  Despite being related, these problems have
different flavors when analyzing them in the context of high probability
results.  Finally, we show that there are also problems that do not directly
fit into our framework but nevertheless allow for high probability results for
specific algorithms.

\subsection{Task Systems}\label{sec:task}

The properties of online problems needed for Theorem~\ref{thm:exptohp} are related
to the definition of task systems. There are, however, some important differences.

To analyze the relation, let us recall the definition of task systems as
introduced by Borodin et al.~\cite{BLS92}. We are given a finite state space
$S$ and a function $d\colon S \times S \rightarrow \real_+$ that specifies
the (finite) cost to move from one state to another. The requests given as input to a
task system are a sequence of $|S|$-vectors that specify, for each state, the
cost to process the current task if the system resides in that state. An online
algorithm for task systems aims to find a schedule such that the overall cost
for transitions and processing is minimized. From now on we will call states in
$S$ \emph{system states} to distinguish them from the states of
Definition~\ref{dfn:state}.  The main difference between states of
Definition~\ref{dfn:state} and system states is that states and the distances between states
depend on the requests provided as input and on the answers given by the online algorithm;
this way there may be infinitely many states. States are also more general than
system states in that we may forbid specific state transitions. 

\begin{theorem}\label{thm:tasksys}
  Let \algA be a randomized online algorithm with expected competitive ratio $r$ for task systems.
  Then, for any $\varepsilon > 0$, there is a randomized online algorithm $\algA'$ for task systems with 
  competitive ratio $(1+\varepsilon)r$ w.h.p. (with respect to the optimal cost).
\end{theorem}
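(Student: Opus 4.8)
The plan is to cast task systems as an online minimization problem in the sense of Section~\ref{sec:prelim}, verify that this problem is opt-bounded and symmetric, and then invoke Theorem~\ref{thm:exptohp} directly. I would take the initial configuration to be the starting system state, an input request to be an $|S|$-vector of processing costs as in Borodin et al.~\cite{BLS92}, and an answer at a step to be the system state the algorithm moves to; the per-answer cost $\mathcal{P}$ of Definition~\ref{dfn:partitionable} is then the transition cost $d$ from the previous system state to the new one plus the processing cost incurred in the new state. Both are non-negative, so the total cost is the sum of these contributions and the problem is \emph{partitionable}; in particular no artificial end marker is needed, since appending request--answer pairs never decreases the cost. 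The problem is not request-bounded, because processing costs may be arbitrarily large, but this is irrelevant: Theorem~\ref{thm:exptohp} as stated requires only opt-boundedness and symmetry.

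Next I would pin down the states of Definition~\ref{dfn:state} for this problem. Feasibility is never an issue (any continuation is valid), and the cost of a continuation $(x'',y'')$ depends on the history $(I,x,y)$ only through the system state reached after executing $y$ on $x$ from $I$; hence two triples are equivalent exactly when they end in the same system state. So the states of the problem are precisely the (finitely many) system states in $S$, and $\algOpt_s$ is just the optimal offline schedule starting from the system state $s$. Since we regard the start state as part of the initial configuration, every state contains a triple $(I,\emptyword,\emptyword)$ and is therefore initial, so the problem is \emph{symmetric} in the sense of Definition~\ref{dfn:symmetric}.

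For opt-boundedness I would take $B := \max_{s,s'\in S} d(s,s')$, the diameter of the task system, which is finite because $S$ is finite and $d$ is finite-valued. For any request sequence $x$ and any states $s,s'$, an optimal schedule from $s'$ can be mimicked starting from $s$ by first paying at most $B$ to move $s \to s'$ and then replaying it, so $\cost{\algOpt_s(x)} \le \cost{\algOpt_{s'}(x)} + B$; by symmetry $|\cost{\algOpt_s(x)} - \cost{\algOpt_{s'}(x)}| \le B$, which is exactly Definition~\ref{dfn:opt-bounded}. The problem is thus opt-bounded and symmetric, so applying Theorem~\ref{thm:exptohp} to the given expected $r$-competitive algorithm \algA yields a randomized online algorithm $\algA'$ for task systems with competitive ratio $(1+\varepsilon)r$ w.h.p.

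I expect the only point requiring care to be the identification of Definition~\ref{dfn:state}'s states with system states — checking that no finer information about the past (which requests or answers produced the current system state, or at which step) can influence the cost or feasibility of any future continuation — together with the related point that one must read task systems with an arbitrary rather than a fixed start state (which changes optimal and algorithmic cost by at most $B$, hence does not affect competitive ratios up to the additive constant) so that symmetry holds without redefining the problem. Once those definitional matches are in place, Theorem~\ref{thm:exptohp} does all the work.
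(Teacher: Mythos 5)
Your proof is correct and follows the same overall route as the paper's: identify the system states with the states of Definition~\ref{dfn:state} (noting that requests can assign individual costs to each system state, so no finer information about the past matters), observe partitionability from the non-negative per-round cost, take $B$ to be the diameter $\max_{s,t\in S}d(s,t)$ to obtain opt-boundedness, and then invoke Theorem~\ref{thm:exptohp}. The one point you handle differently is the symmetry requirement. You simply read task systems with an arbitrary start state as the initial configuration, arguing the competitive ratio transfers because both the optimum and the algorithm's cost change by $O(B)$; the paper instead keeps the fixed-start reading and constructs an intermediate algorithm $\algA''$ that, at each reset of the simulation, relabels the system states so the current state maps to $s_0$ and simulates $\algA$ on the relabeled instance (this implicitly uses that $\algA$ is competitive on any task system, since relabeling by a non-automorphism changes $d$). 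Both resolutions work; yours has the mild advantage of only needing $\algA$ for the single given task system, but you should spell out the construction behind the clause ``changes algorithmic cost by at most $B$'' --- e.g., from start $s$, respond to the first request by transitioning to whichever state $\algA$ would choose from $s_0$, overpaying at most $B$ on that single transition, and follow $\algA$ verbatim thereafter --- since $\algA$ as given is not defined from starting states other than $s_0$, and without this step the cost comparison is not self-evident.
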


\begin{proof}
  In a task system, the system states are exactly the states according to our definition, because
  the optimal future cost only depends on the current system state and a future request has the
  freedom to assign individual costs to each of the system states.
  In other words, an equivalence class $s$ from Definition~\ref{dfn:state} (\ie, one state) consists
  of exactly one unique system state.
  To apply Theorem~\ref{thm:exptohp}, we choose the constant $B$ of the theorem to be
  $\max_{s,t \in S}\{d(s,t)\}$. This way, the problem is opt-bounded as one
  transition of cost at most $B$ is sufficient to move to any system state used
  by an optimal computation.
  The problem is clearly partitionable according to
  Definition~\ref{dfn:partitionable} as each round is associated with a
  non-negative cost. The adversary may also stop after an arbitrary request.
  
  The remaining condition of Theorem~\ref{thm:exptohp} that every state is
  initial formally conflicts with the definition of task systems, because usually
  there is a unique initial configuration that corresponds to a state $s_0$.
  This problem is easy
  to circumvent by relabeling the states before each run (reset) of the algorithm, \ie,
  we construct an algorithm $\algA''$ that is used instead of $\algA$. When
  starting the computation, $\algA''$ determines the mapping and simulates the
  run of $\algA$ on the mapped instance. 
  Thus we are able to use Theorem~\ref{thm:exptohp} on $\algA''$ and the claim follows.
  
\end{proof}
\subsection{The \textit{k}-Server Problem}

The $k$-server problem, introduced by Manasse et al.~\cite{MMS90}, is
concerned with the movement of $k$ servers in a
metric space. Each request is a location and the algorithm has to move one of
the servers to that location. If the metric space is finite,
this problem is well known
to be a special metrical task system.  The states are all combinations of $k$
locations in the metric space and the distance between two states is the corresponding
minimum cost to move servers such that the new locations are reached. Each
request is a vector where all states but those containing the correct
destination have a processing time $\infty$ and the states containing the
destination have processing time zero. Using Theorem~\ref{thm:tasksys} this
directly implies that all algorithms with a constant expected competitive ratio
for the $k$-server problem in a finite metric space can be transformed into
algorithms that have almost the same competitive ratio w.h.p.

If the metric space is infinite, an analogous result is still valid except that
we have to bound the maximum transition cost by a constant.
This is the case, because the proof of Theorem~\ref{thm:tasksys} uses
the finiteness of the state space only to ensure bounded transition costs.

Without the restriction to bounded distances, in general we cannot obtain a
competitive ratio much better than the deterministic one w.h.p.

\begin{theorem}\label{thm:server}
  Let $(M,d)$ be a metric space with $|M|=n$ constant, $s \in M$ be the initial position of all
  servers, $\ell$ a constant and let $r$ be the infimum over the competitive ratios of all
  deterministic online 
  algorithms for the $k$-server problem in $(M,d)$ for instances with at most
  $\ell$ requests.
  For every $\varepsilon > 0$, there is a metric space $(M',d')$ where for any randomized online
  algorithm \algR for the $k$-server problem there is an oblivious adversary
  against which the solution of \algR has a competitive ratio of at least
  $r-\varepsilon$ with constant probability.
\end{theorem}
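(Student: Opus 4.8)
The plan is to build the adversarial metric space $(M',d')$ by taking the given finite metric space $(M,d)$ and attaching a long "escape path" of cheaply-reachable but far-apart points, so that a randomized algorithm which behaves well on the $(M,d)$-part can be forced, with constant probability, into a configuration indistinguishable from a bad deterministic state. Concretely, first I would fix a finite-support probability distribution that is $(r-\varepsilon/2)$-hard against all deterministic algorithms on instances of length at most $\ell$ in $(M,d)$; such a distribution exists by the definition of $r$ as an infimum over deterministic ratios together with von Neumann / Yao's principle (a minimax argument over the finite game tree of depth $\ell$). This is the starting point: against this fixed distribution, every deterministic algorithm — hence every fixing of the random tape of a randomized algorithm — has expected ratio at least $r-\varepsilon/2$ on the $(M,d)$-subgame.

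Next I would describe $(M',d')$ itself. Let $M' = M \cup \{q_1,\dots,q_N\}$ where the $q_j$ are new points, $N$ chosen large depending on $\varepsilon$, $n$, $\ell$, $k$. Put $d'(s,q_1)$ small (say $1$), $d'(q_j,q_{j+1})$ small, but make the $q_j$ mutually very far apart and very far from all of $M$ except via the path through $s$; equivalently, isolate $M$ by making $d'$ restricted to $M$ equal to $d$ scaled so that the whole $(M,d)$-cost is negligible compared to one "long" step on the escape path. The adversary's strategy has two stages. Stage 1: issue a block of requests drawn from the hard distribution on $(M,d)$. Stage 2: depending on the algorithm's realized configuration — in particular on which server, if any, has wandered — issue requests $q_1, q_2, \dots$ walking down the escape path; a single server can follow the path for total cost $O(N)$, but an algorithm that has "misplaced" servers during Stage 1 must pay a large detour. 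The key point is that the Stage-1 behavior is forced to be $(r-\varepsilon/2)$-bad with constant probability (over the random tape and the distribution), and conditioned on that event Stage 2 only adds lower-order cost to both $\algR$ and $\algOpt$, so the ratio degrades by at most $\varepsilon/2$; choosing $N$ large makes Stage-1 costs negligible relative to $\algOpt$'s total, which is dominated by the escape-path traversal — wait, that is backwards, so instead one scales the other way: make the escape path cheap and the $(M,d)$ costs the dominant term, using the path only as a device to certify "the algorithm is in a bad state" and to let the adversary stop the instance at a moment when the ratio is witnessed. I would pick whichever scaling makes both the optimum and the algorithm's cost dominated by the Stage-1 contribution while keeping the instance valid; the cleanest is to repeat the Stage-1 hard block many times (possible since every state is reachable and the construction is symmetric-like), amplifying so that the constant probability of a bad block becomes a probability bounded away from $0$ of the average ratio being at least $r-\varepsilon$.

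The main obstacle I anticipate is the quantifier order: we need a single oblivious adversary (a single distribution over request sequences, fixed before the coin flips of $\algR$) that is bad for $\algR$ with constant probability in the strong sense of Definition~\ref{dfn:compwhp}'s negation — i.e. constant, not $1-o(1)$, which is exactly what the paper claims is unavoidable here. Getting "constant probability" rather than "in expectation" requires that the hardness of the Stage-1 distribution be robust: not merely that the expected deterministic ratio is $\ge r-\varepsilon/2$, but that with constant probability over the joint (distribution, tape) randomness the realized ratio is $\ge r - \varepsilon/2$ on a block whose optimum is itself bounded (so it cannot be amortized away). I would obtain this by a Markov-type argument in reverse: since the ratio is bounded above (the metric is finite, instances have length $\le \ell$, so costs lie in a bounded range), an expectation of $\ge r-\varepsilon/2$ forces probability $\ge c(\varepsilon,n,\ell)>0$ of being $\ge r-\varepsilon$. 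Handling the reset/concatenation so that these constant-probability bad events compose correctly across repeated blocks — using independence of the tape's fresh bits per block, as in the Claim after Lemma~\ref{lm:Xprob} — is the remaining technical point, but it is the same bookkeeping used in the main theorem and should go through.
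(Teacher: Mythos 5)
Your high-level plan — isolate a length-$\ell$ finite game, argue that with constant probability over its tape the randomized algorithm effectively plays some fixed bad deterministic strategy, and then let the adversary punish that strategy — is the right idea and matches the paper's intent. But several load-bearing steps are either confused or missing, and the paper's actual route is cleaner at exactly the points where you hesitate.

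First, the metric construction. You attach an ``escape path'' and then, mid-proof, reverse yourself about which part of the instance should dominate the cost. The paper instead takes $M'$ to be a disjoint union of \emph{scaled} copies $M_i$ of $M\setminus\{s\}$, all sharing the anchor point $s$, with $d'$ restricted to $M_i$ equal to $i\cdot d$. The adversary then plays the hard length-$\ell$ game entirely inside one copy $M_j$ with $j = \zeta\cdot\min_{u,v} d(u,v)$. This does two things at once: it forces $\opt\ge\zeta$ (so the optimum can be made arbitrarily large, which is essential to contradict Definition~\ref{dfn:compwhp}, where the allowed failure probability shrinks with $\opt$), and it keeps the game structurally identical to the original $(M,d)$ game, so the constant $r$ carries over verbatim. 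Your escape path does not obviously give you an unbounded optimum for a \emph{bounded-length} hard block, which is the whole point of the exercise; without it, a constant failure probability does not refute the w.h.p.\ claim.

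Second, obtaining the constant probability. You invoke Yao's principle to get a hard input distribution and then propose a reverse-Markov step. This can be made to work, but you have not closed the gap: reverse Markov needs the realized ratio to be bounded above by a constant, which in turn requires the hard distribution to keep the per-instance optimum bounded away from zero, and you have not arranged that. The paper sidesteps this entirely with a purely combinatorial count: assuming laziness (WLOG), within $\ell$ requests over a metric of constant size $n$ there are fewer than $\psi = (k^{\ell+1})^{(n^{\ell+1})}$ distinct deterministic behaviors. The adversary inspects the distribution over those $\psi$ behaviors induced by the tape, picks the most likely one (probability $\ge \psi^{-1}$), and plays the worst deterministic request sequence against it. No minimax, no boundedness hypothesis, no distribution over inputs — just a single deterministic sequence, as an oblivious adversary should be.

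Third, you omit the padding phase. The paper prepends $\lambda$ requests to $s$ (free for the optimum) precisely so that the algorithm has the opportunity to consume arbitrarily many random bits before the real game begins; otherwise the counting argument would implicitly restrict the algorithm to constantly many random bits and the lower bound would be weaker than claimed. Finally, the block-repetition amplification you sketch at the end is unnecessary once the scaling device is in place, and would require careful handling of the server configuration carried over between blocks — a complication the scaled-copy construction avoids by letting a single block already have arbitrarily large optimal cost.
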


\begin{proof}
  We obtain $(M',d')$ as follows. The set $M'$ is composed of copies of
  $M\setminus \{s\}$. Let, for each $i \in \nat$, $M_i$ denote the $i$th copy of $M$ in $M'$
  together with the point $s$ (\ie, $s$ is in each of the sets $M_i$). This
  way $M = M_1$. For any pair of points $u,v \in M$ with copies $u_i,v_i$ in $M_i$, 
  we set $d'(u_i,v_i) = i \cdot d(u,v)$; we call $i$ the \emph{scaling factor} of $M_i$.
  For any $i \neq j$, the distance between points in
  distinct copies of $M$ is $d'(u_i,v_j) = d(s,u_i) + d(s,v_j)$.
  This way $(M',d')$ is a metric and we can choose freely a scaling factor for the cost
  function $d$.

  We now describe an adversary \adv that uses oblivious adversaries for
  deterministic online algorithms as black boxes and has two parameters $\lambda$ and $\zeta$ 
  that specify lower bounds on the number of requests and the cost of the
  optimal offline solution. \adv starts with
  $\lambda$ requests of the point $s$ in $M$ (\ie, the optimal cost after the
  first $\lambda$ requests is zero). Note that we cannot assume $\lambda$ to
  be a constant\punctuationfootnote{Without the first $\lambda$ requests, for a
  fixed online algorithm the only way to
  access more than constantly many random bits within $\ell$ requests is to
  use the random bits to decide on further access to the random tape. But then
  we could fix a constant probability to only access constantly many random
  bits. Thus, omitting $\lambda$ would strengthen the adversary and
  weaken this lower bound result more than it is acceptable.}.
  
  Afterwards the adversary starts a second phase where it simulates a
  deterministic adversary in a suitably scaled copy of $M$. 
  We assume without loss of generality that any considered
  algorithm is \emph{lazy}, \ie, it answers requests by only moving at most
  one server (see Manasse et al.~\cite{MMS90}).
  We choose as scaling factor  $j = \zeta \cdot \min_{u,v \in M}\{d(u,v)\}$.
  \adv sends all subsequent requests in $M_j$. 
  
  Due to the laziness assumption, after the first $\lambda$ requests there
  are at most $k^\ell$ different possibilities to answer the
  subsequent $\ell$ requests (we can view an answer simply as the index of one
  of the $k$ servers).  Adding also all shorter request sequences, by
  the geometric series there are at most 
  $\frac{k^{\ell+1}-1}{k-1} < k^{\ell+1}$ possible answer sequences. Analogously, there
  are less than $n^{\ell+1}$ possible request sequences of length at most
  $\ell$ in $M_j$. Thus, the total number of algorithms behaving differently within
  at most $\ell$ requests is less than $\psi=(k^{\ell+1})^{(n^{\ell+1})}$ and
  therefore constant.

  \adv may choose one of at most $\psi$ deterministic algorithms to play against.
  He analyzes the probability distribution of \algR's strategies after the first
  $\lambda$ requests.
  Then he selects one of the $\psi$ algorithms that corresponds to the strategy
  run by \algR with maximal probability. With \adv's choice of the algorithm, the
  competitive ratio of \algR is at least $r-\varepsilon$ with 
  constant probability at least $\psi^{-1}$ and the choice of $j$ ensures
  that the optimal cost is at least $\zeta$.
  
\end{proof}

\begin{corollary}
  If we allow the metric to be infinite, then there is no $(k-\varepsilon)$-competitive 
  online algorithm w.h.p. for the $k$-server problem for any constant $\varepsilon$.
\end{corollary}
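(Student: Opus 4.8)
The plan is to combine the classical deterministic lower bound for the $k$-server problem with Theorem~\ref{thm:server}, and then observe that ``with constant probability'' is incompatible with the w.h.p.\ guarantee of Definition~\ref{dfn:compwhp} whenever the optimal cost can be made arbitrarily large. First I would fix $\varepsilon>0$ and recall, via the averaging argument of Manasse et al.~\cite{MMS90}, that on any metric with at least $k+1$ points the adversary requesting the currently uncovered point pushes the competitive ratio of every deterministic lazy algorithm towards $k$ as the number of requests grows. Consequently there is a constant $\ell$ and a finite metric $(M,d)$ on $n=k+1$ points, with all servers initially on a fixed point $s$, for which the infimum $r$ over the competitive ratios of deterministic online algorithms on instances with at most $\ell$ requests satisfies $r\ge k-\varepsilon/2$.

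Next I would apply Theorem~\ref{thm:server} to this $(M,d)$, $s$, $\ell$ and accuracy $\varepsilon/4$. This produces an infinite metric space $(M',d')$ and, reading off the proof, a positive constant $q=\psi^{-1}$ depending only on $k,\ell,n$ such that for every randomized online algorithm $\algR$ in $(M',d')$ there is an oblivious adversary whose request sequence has optimal cost at least $\zeta$ and against which $\cost{\algR}\ge(r-\varepsilon/4)\opt{}\ge(k-3\varepsilon/4)\opt{}$ holds with probability at least $q$; the point is that the scaling factor $j$ lets $\zeta$ be chosen arbitrarily large without changing $q$.

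The final step is the contradiction with Definition~\ref{dfn:compwhp}. Assuming some $\algR$ were $(k-\varepsilon)$-competitive w.h.p., I would instantiate the definition with $\beta=1$ to get a constant $\alpha$ with $\prob[\cost{\algR}\ge(k-\varepsilon)\opt{}+\alpha]\le(2+\opt{})^{-1}$ for all inputs, then run the adversary of the previous paragraph with $\zeta$ chosen so large that $(2+\zeta)^{-1}<q$ and $\zeta>4\alpha/\varepsilon$ (so that $(k-3\varepsilon/4)\zeta>(k-\varepsilon)\zeta+\alpha$). On that input $\opt{}\ge\zeta$, hence $\{\cost{\algR}\ge(k-3\varepsilon/4)\opt{}\}\subseteq\{\cost{\algR}\ge(k-\varepsilon)\opt{}+\alpha\}$, giving $q\le(2+\opt{})^{-1}\le(2+\zeta)^{-1}<q$, which is absurd. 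I expect the only genuinely delicate point to be this bookkeeping: one must ensure the fixed additive constant $\alpha$ from the w.h.p.\ definition is dominated once the optimal cost is driven to infinity, which is exactly the scaling freedom that Theorem~\ref{thm:server} provides; everything else is a direct chain of inequalities.
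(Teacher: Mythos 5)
Your proposal is correct and follows exactly the paper's approach: the paper's proof consists only of the one-line observation that the deterministic lower bound of Manasse et al.\ satisfies the hypotheses of Theorem~\ref{thm:server}, and you have supplied the details it leaves implicit. In particular, your bookkeeping with $\beta=1$, the additive constant $\alpha$, and the scaling freedom in $\zeta$ is precisely the (omitted) argument needed to convert the constant-probability failure guaranteed by Theorem~\ref{thm:server} into a contradiction with Definition~\ref{dfn:compwhp}.
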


We simply use that the lower bound of Manesse et al.~\cite{MMS90} satisfies the
properties of Theorem~\ref{thm:server}.

\subsection{Paging}\label{subsec:paging}

In the paging problem there
is a cache that can accommodate $k$ memory pages and the input consists of a
sequence of requests to memory pages.  If the requested page is in the cache, it
can be served immediately, otherwise some page must be evicted from the cache,
and be replaced by the requested page; this process is called a \emph{page fault}.
The aim of a paging algorithm is to generate as few page faults as possible.
Each request generates either cost $0$ (no page fault) or $1$ (page fault), and
the overall cost is the sum of the costs of the requests.
Paging can be seen as a $k$-server problem restricted to uniform metrics
where all distances are exactly one. In particular, the transition costs in that
metric are bounded.
Hence, the assumptions discussed in the previous subsection are fulfilled,
meaning that for any paging algorithm with expected competitive ratio $r$ there
is an algorithm with competitive ratio $r(1+\varepsilon)$ w.h.p.

Note that the marking algorithm is analyzed based on phases that correspond to
$k+1$ distinct requests, and hence the analysis of the expected competitive
ratio immediately gives the $2H_k-1$ competitive ratio also w.h.p. However,
\eg, the optimal algorithm with competitive ratio $H_k-1$ due to Achlioptas et al.~\cite{ACN00}
is a distribution-based algorithm where the high probability analysis is not
immediate; Theorem~\ref{thm:exptohp} gives an algorithm with competitive ratio
$H_k(1+\varepsilon)$ w.h.p. also in this case.

\subsection{Job Shop Scheduling with Unit Length Jobs}

In Section~\ref{sec:discussion} we will show that none of the conditions of
Theorem~\ref{thm:exptohp} can be omitted.  However, there are problems that do
not fit the assumptions of the theorem, and still can be solved almost
optimally by specific randomized online algorithms with high probability.
We use, however, a weaker notion of high probability than in the previous sections.

Consider the problem \emph{job shop scheduling with unit length tasks} (\jss
for short) defined as follows: We are given a constant number of $n$
\emph{jobs} $J_1$ to $J_n$ that consist of $m$ \emph{tasks} each.  Each such
task needs to be processed on a unique one of $m$ \emph{machines} which are
identified by their indices $1,2,\dots,m$, and we want to find a schedule with
the following properties.  Processing one task takes exactly $1$ time unit and,
since all jobs need every machine exactly once, we may represent them as
permutations $P_i=(p^i_1,p^i_2,\dots,p^i_m)$ of the machine indices, where
$p^i_j \in\{1,2,\dots,m\}$ for every $i \in \{1,2,\dots,n\}$ and
$j\in\{1,2,\dots,m\}$.  All $P_i$ arrive in an online fashion, that is, the
$(k+1)$th task of $P_i$ is not known before the $k$th task is processed.
Obviously, as long as all jobs request different machines, the work can be
parallelized.  If, however, at one time step, some of them ask for the same
machine, all but one of them have to be delayed.  The cost of a solution is
given by the total time needed for all jobs to finish all tasks; the goal is to
minimize this time (\ie, the overall makespan).

In the following, we use a graphical representation that was introduced by
Brucker~\cite{Bru88}. Let us first consider only two jobs $P_1$ and $P_2$.
Consider an $(m\times m)$-grid where we label the
$x$-axis with $P_1$ and the $y$-axis with $P_2$. The cell $(p^1_i,p^2_j)$ models
that, in the corresponding time step, $P_1$ processes a task on machine $p^1_i$
while $P_2$ processes a task on $p^2_j$.  A feasible schedule for the induced
instance of \jss is a path that starts at the upper-left vertex of the grid and
leads to the bottom right vertex.

\begin{figure}[t]
  \centerline{\includegraphics[width=6cm]{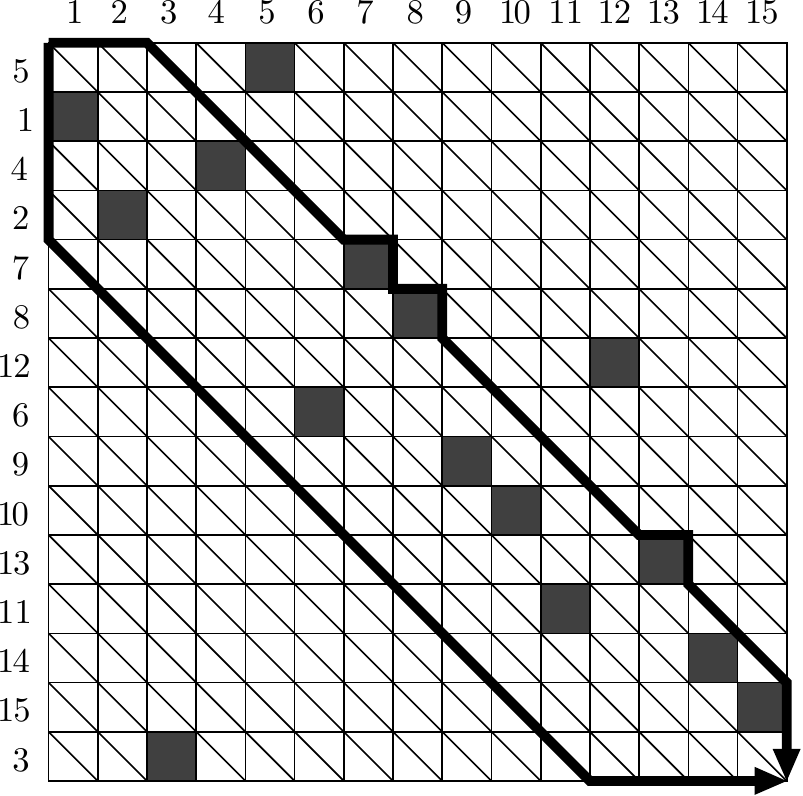}}
  \caption{An example input with two jobs each of size $15$
           and two strategies. Obstacles are marked by filled cells.}
  \label{fig:jssexample}
\end{figure}
It may use diagonal edges whenever $p^1_i\ne
p^2_j$.  However, if $p^1_i=p^2_j$, both $P_1$ and $P_2$ ask for the same machine at the
same time and therefore, one of them has to be delayed.  In this case, we say
that $P_1$ and $P_2$ collide and call the corresponding cells in the grid
\emph{obstacles} (see Fig.~\ref{fig:jssexample} for an
example with $m=15$).  If an algorithm has to delay a job, we say that it
\emph{hits an obstacle} and may therefore not make a diagonal move, but either
a horizontal or a vertical one. In the first case, $P_2$ gets delayed, in the
second case, $P_1$ gets delayed.  Note that, since $P_1$ and $P_2$ are
permutations, there is exactly one obstacle per row and exactly one obstacle
per column for every instance, therefore, $m$ obstacles overall for any
instance. The graphical representation generalizes naturally to the
$n$-dimensional case.

The problem has been studied previously, for instance in~\cite{BKKKM09,HMSW07,Hro05,KK11,Bru88}.
Hromkovi\v{c} et al.~\cite{HMSW07} showed the existence of a randomized online
algorithm \algR that achieves an expected competitive ratio of $1+2n/\sqrt{m}$,
for $n=o(\sqrt{m})$, assuming that it knows $m$.
\algR depends on diagonals in the grid; intuitively (in two or three dimensions), 
a diagonal in the grid is the sequence of integer points on a line that is parallel to the
line from the coordinate $(0,0,\dots,0)$ to $(m,m,\dots,m)$. 
More precisely, let $\mathcal{P}$ be the convex hull of the grid.
Then a diagonal is a sequence of integer points $d=\{d^i\}_1^k$ such that $d^1$ is in the
facet of $P$ that contains the origin $(0,0,\dots,0)$, $d^k$ is in the facet
containing the destination $(m,m,\dots,m)$, none of the two points is in a
smaller-dimensional face, and we obtain $d^{i+1}$ from
$d^{i}$ by increasing each coordinate by exactly one. As shown by Hromkovi\v{c}
et al.~\cite{HMSW07}, the number of diagonals that start at points with all coordinates
at most $r$ is exactly $n\cdot r^{n-1}$.

A diagonal template $D$ with respect to $r$ and $d$ is a sequence of consecutive
points in the grid that starts from $(0,0,\dots,0)$, moves to
$d^1$, visits each point of $d$ and finally moves to the destination
$(m,m,\dots,m)$. To reach $d^1$, $D$ delays each job $P_i$ by $r-d^1_i$ time
units in the begining and delays each job $P_i$ by $d^1_i$ time
units upon reaching the last point of the diagonal. Thus, a schedule that follows
a diagonal template without delays has a length of exactly $m+r$. A diagonal
strategy with respect to a diagonal template $D$ is a minimum-length schedule
that visits each point of $D$. Note that an online algorithm has all necessary
information to run a diagonal strategy, because when reaching an obstacle, all
possible ways to the subsequent point are available; an example of a diagonal
strategy is depicted in Fig.~\ref{fig:jssexample}.

The randomized algorithm \algR fixes the value $r$ and chooses
uniformly at random a diagonal $d$ with $\|d^1\|_\infty \le r$; then it follows
the corresponding diagonal strategy.

\begin{theorem}
  For any $r = o(\sqrt{m})$ there is an online algorithm for \jss that is $(1+f(m))$-competitive with
  probability $o_m(1)$, for any $f(m)=\omega(1/r)$.
\end{theorem}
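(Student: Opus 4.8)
The plan is to show that the algorithm $\algR$ of Hromkovi\v{c} et al.~\cite{HMSW07}, run with the given parameter $r$, already has the desired high-probability guarantee; no modification of the algorithm is needed. Recall that $\algR$ picks uniformly at random one of the $n\cdot r^{n-1}$ diagonals $d$ with $\|d^1\|_\infty\le r$ and then follows the corresponding diagonal strategy. The first observation is deterministic: the makespan of a diagonal strategy is at most $m+r+c\cdot N$, where $N$ is the number of obstacles lying on the chosen diagonal and $c=c(n)$ is a constant. Indeed, a schedule that follows the diagonal template without collisions has length exactly $m+r$, and every obstacle on the diagonal merely forces replacing one diagonal move by a bounded number of axis-parallel moves. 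Since every job consists of $m$ tasks, any schedule---in particular an optimal one---has makespan at least $m$, so $\opt{}\ge m$ and the competitive ratio of $\algR$ on a given instance is at most $1 + (r + c\cdot N)/m$.

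The second ingredient is a bound on $\ev[N]$. Each obstacle cell lies on exactly one diagonal, and it can be hit by $\algR$ only if that diagonal is one of the $n\cdot r^{n-1}$ eligible ones; to lie on such a diagonal an obstacle cell must have all its coordinates within a common window of width $r$, which---arguing separately for each of the $\binom{n}{2}$ colliding pairs of jobs, as in~\cite{HMSW07}---bounds the number of obstacle cells on eligible diagonals by $O(m\cdot r^{n-2})$. Since $\algR$ chooses its diagonal uniformly among the $n\cdot r^{n-1}$ eligible ones, $\ev[N]$ equals the average over these diagonals of the number of obstacles they contain, which is $O(m\cdot r^{n-2})/(n\cdot r^{n-1}) = O(m/r)$. (For $n=2$ this is immediate: there are exactly $m$ obstacles in the grid, each on one of roughly $2r$ eligible diagonals, so $\ev[N]\le m/(2r)$.)

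It remains to combine these two facts through Markov's inequality. By the first paragraph, $\algR$ fails to be $(1+f(m))$-competitive only if $c\cdot N > m f(m) - r$. Since $r = o(\sqrt m)$ we have $m/r = \omega(r)$, and since $f(m) = \omega(1/r)$ this yields $m f(m) = \omega(r)$; hence $m f(m) - r = (1-o(1))\,m f(m)$, which is positive for all large $m$. Therefore, using $\ev[N] = O(m/r)$,
\[
  \prob\big[\,\algR \text{ is not } (1+f(m))\text{-competitive}\,\big]
  \;\le\; \prob\!\left[\,N > \frac{m f(m) - r}{c}\,\right]
  \;\le\; \frac{c\,\ev[N]}{(1-o(1))\,m f(m)}
  \;=\; O\!\left(\frac{1}{r\,f(m)}\right).
\]
Because $f(m) = \omega(1/r)$, the product $r\,f(m)$ tends to infinity, so the right-hand side is $o_m(1)$, which is exactly the asserted guarantee.

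I expect the only non-routine point to be the estimate $\ev[N] = O(m/r)$, i.e., the claim that the obstacles are spread sufficiently evenly among the $n\cdot r^{n-1}$ eligible diagonals; this is essentially the content of the analysis in~\cite{HMSW07}, and the main thing to notice here is that it controls the expected \emph{delay} $N$, so that Markov's inequality converts it directly into a tail bound---whereas in~\cite{HMSW07} it was only used to bound the expected competitive ratio. Everything else (the deterministic makespan formula $m+r+c\cdot N$, the bound $\opt{}\ge m$, and the asymptotic bookkeeping with the two growth conditions $r=o(\sqrt m)$ and $f(m)=\omega(1/r)$) is routine.
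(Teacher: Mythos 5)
Your proof is correct and follows essentially the same route as the paper: both arguments (i) use the bound from~\cite{HMSW07} on the total number of delays across all $n\cdot r^{n-1}$ eligible diagonals, (ii) observe that the optimal makespan is at least $m$, and (iii) bound the fraction of ``bad'' diagonals by dividing total delay by the threshold $mf(m)-r$. The only cosmetic difference is that you phrase step (iii) explicitly as Markov's inequality applied to the expected delay $\ev[N]=O(m/r)$, whereas the paper phrases it as a counting argument (``$b$ is maximized if every bad diagonal has delay exactly $mf(m)-r$''), which is of course Markov's inequality for the uniform distribution in disguise.
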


\begin{proof}
  We already mentioned that \algR chooses one of $n\cdot r^{n-1}$ diagonals. It is
  also known that the total number of delays in all diagonal strategies caused by
  obstacles is at most $m\cdot{n \choose 2} \cdot (n-1) \cdot r^{n-2}$
  \cite{HMSW07}.
  Clearly, any schedule has a length of at least $m$.
  Thus, in order to be $(1+f(m))$-competitive, we need a diagonal
  strategy such that $(m+r+d)/m \le (1+f(m))$, where $d$ is the number of delays
  due to obstacles. Let $b$ be the number of
  diagonals considered by the algorithm such that the corresponding diagonal
  strategies have more than $d \le mf(m)-r$ delays caused by obstacles. Then, to
  show our claim, we have to ensure that $b/(n \cdot r^{n-1}) = o_m(1)$.
  
  The value of $b$ is maximized if we assume that any diagonal has either no
  obstacles or the delay is exactly $mf(m)-r$. Therefore,
  \[ b \le \frac{m\cdot{n \choose 2} \cdot (n-1) \cdot r^{n-2}}{mf(m)-r}. \]
  Since the dimension $n$ is a constant, the claim follows from
  \[ \frac{m\cdot{n \choose 2} \cdot (n-1) \cdot r^{n-2}}{(mf(m)-r)n \cdot r^{n-1}} 
     < \frac{m\cdot n^2}{(mf(m)-r) \cdot r} 
     = o_m(1). \]
\end{proof}
\section{Necessity of Requirements}\label{sec:discussion}

As mentioned above, our result holds with large generality as many well-studied
online problems meet the requirements we imposed.
However, the assumptions of Theorem~\ref{thm:exptohp} require that the problem
at hand
\begin{enumerate}
  \item is partitionable,
  \item every state is equivalent to some initial state, and \label{enum:init}
  \item $\forall s,s',x\colon
  |\cost{\algOpt_s(x)}-\cost{\algOpt_{s'}(x)}|\le B.$
  \label{enum:optb}
\end{enumerate}

As stated before, partitionability is not restrictive; every problem can be
presented as a partitionable one.  
We now show that removing any of the conditions
\ref{enum:init} and \ref{enum:optb} allows for a counterexample to the theorem.  
For the purpose of this discussion,
let $s$ and $s'$ in condition~\ref{enum:optb} range over all \emph{initial} states to
have it defined also for non-symmetric problems.

First, let us consider the following online problem where condition \ref{enum:init} is violated,
\ie, where not every state is equivalent to some initial state. 
There are
$n+2$  requests $x_{-1},x_0,\dots,x_n$. The request $x_{-1}=0$ is a dummy
request.  The request $x_0\in\{0,1\}$ is a test: if $y_{-1}=x_0$ the test is
\emph{passed}, otherwise the test is \emph{failed}; the cost of $y_{-1}$ and
$y_0$ is always zero.  For the remaining requests $x_1,\dots,x_n$ we have
$x_i\in\{0,1\}$. The cost of $y_i$, for $i=1,\dots,n$, is $1$ if the test has been
passed, or if $y_{i-1}=x_{i}$.  Otherwise, the cost of $y_i$ is $5$. The cost of
$y_n$ is zero.  The problem is clearly partitionable.  There are six states: the
initial state, then two possible states to guess the test, then one state for
processing all requests with the test passed, and two states for processing
requests with the test failed, based on the value of the previous answer.  From any
state, however, the optimal value of the remaining sequence of $m$ requests is
between $m$ and $m+6$.  A randomized online algorithm that guesses each time
independently has probability $1/2$ to pass the test incurring a cost of
$n$, and probability $1/2$ to fail, in which case, for any subsequent request,
it pays 1 with probability $1/2$, and $5$ with probability $1/2$. Putting everything 
together, the expected cost is $2n$, so $r=2$. On the other hand, for any
randomized algorithm, there is an input for which it has probability at least $1/2$
of failing the test, and then on each request probability at least $1/2$ of
a wrong guess. From  symmetry arguments we conclude that, once the test is
failed,  the probability that the algorithm makes at least $n/2-1$ wrong
guesses is at least $1/2$. Hence, with probability at least $1/4$ the cost of
the algorithm is at least $3n-4$, so it cannot be $c$-competitive w.h.p. for
any $c<3$.

Next, let us remove condition~\ref{enum:optb}.
We have seen a hint to the necessity in Theorem~\ref{thm:server}, but currently
no randomized online algorithm for the $k$-server problem is known to have a
competitive ratio better than $2k-1$ independent of the size of the metric
space. Therefore we give a second unconditional argument.
Let us consider the following problem: the states are pairs $(s,t)$ where
$s\in\{0,1\}$, $t\in\nat$, and any state can be an initial one.
Processing the request $r_i$ in state $(s,t)$ produces the answer
$y_i\in\{0,1\}$; the cost of $y_i$ is $2^t$ if $s=r_i$, and $3\cdot2^t$ if
$s\not=r_i$. After processing the request, the new state is $(y_i,t+1)$. It is
easy to verify that the problem is partitionable and that the states are in accord
with Definition~\ref{dfn:state}.  Also, it is easy to check that the worst-case
expected ratio of the algorithm that produces random answers is $2$.  On the
other hand, consider inputs that start from state $(0,0)$ with $x_1=0$. The
optimal cost is $2^n-1$, however, any randomized algorithm has probability at
least $1/4$ of incurring cost $9\cdot2^{n-2}$ (by failing the two last requests). 

\section{Conclusion}\label{sec:conclusion}

Our result opens several new questions. For instance, our results, so far, are
only shown for minimization problems.  Also note that our analysis does not
hold for the notion of \emph{strict} competitiveness (\ie, $\alpha=0$) for
arbitrary input sizes.  Furthermore, the assumption that all input strings are
feasible for all states (implied by the opt-boundedness) may allow for relaxations.

Until now, we only focused on upper bounds on the competitive ratio. Our
results, however, also open a potential lower bound technique: if a problem
satisfies our requirements, a lower bound w.h.p. implies a lower bound of
almost the same quality in expectation. In this context it is natural to ask
for the requirements of problems for a complementary result. How can we
determine the class of problems such that each algorithm that is
$r$-competitive w.h.p. can be transformed into an algorithm that is almost
$r$-competitive in expectation?

Finally, we would like to suggest the terminology to call a randomized online algorithm \algA
\emph{totally} $r$-competitive if, for any positive constant $\varepsilon$,
\algA is $c$-competitive in expectation and we may use
Theorem~\ref{thm:exptohp} to construct an online algorithm that
is $(r+\varepsilon)$-competitive w.h.p.
Analogously, an online problem is totally $c$-competitive if it admits a
totally $r$-competitive algorithm.

\subsection*{Acknowledgment}

The authors want to express their deepest thanks to Georg Schnitger who
gave some very important impulses that contributed to the results of this
paper.


\begin{thebibliography}{10}
\bibitem{Azu67}
  K.~Azuma.
  \newblock Weighted sums of certain dependent random variables.
  \newblock \emph{T\^ohoku Mathematical Journal}, 19(3):357--367, 1967.
\bibitem{ACN00}
  D.~Achlioptas, M.~Chrobak, and J.~Noga.
  \newblock Competitive analysis of randomized paging algorithms.
  \newblock \emph{Theoretical Computer Science}, 234(1-2):203--218, 2000.
\bibitem{BKKKM09}
  H.-J.~B{\"o}ckenhauer, D.~Komm, R.~Kr{\'a}lovi\v{c}, R.~Kr{\'a}lovi\v{c}, and T.~M{\"o}mke.
  \newblock On the advice complexity of online problems.
  \newblock In \emph{Proc.~of the 20th International Symposium on Algorithms and Computation
    (ISAAC~2009)}, \emph{LNCS} 5878, pp.~331--340. Springer-Verlag, 2009.
\bibitem{BE98}
  A.~Borodin and R.~El-Yaniv.
  \newblock \emph{Online Computation and Competitive Analysis}.
  \newblock Cambridge University Press, 1998.
\bibitem{BLS92}
  A.~Borodin, N.~Linial, and M.~E.~Saks.
  \newblock An optimal on-line algorithm for metrical task system.
  \newblock \emph{Journal of the ACM}, 39(4):745--763, 1992.
\bibitem{Bru88}
  P.~Brucker.
  \newblock An efficient algorithm for the job-shop problem with two jobs.
  \newblock \emph{Computing}, 40(4):353--359, 1988.
\bibitem{FKL+91}
  A.~Fiat, R.~M.~Karp, M.~Luby, L.~A.~McGeoch, D.~D.~Sleator, and N.~E.~Young.
  \newblock Competitive paging algorithms.
  \newblock \emph{Journal of Algorithms}, 12(4):685--699, 1991.
\bibitem{Hoe63}
  W.~Hoeffding.
  \newblock Probability inequalities for sums of bounded random variables.
  \newblock \emph{Journal of the American Statistical Association}, 58(301):13--30, 1963.
\bibitem{Hro05}
  J.~Hromkovi{\v{c}}.
  \newblock \emph{Design and analysis of randomized algorithms}.
  \newblock Springer-Verlag, Berlin, 2005.
\bibitem{HMSW07}
  J.~Hromkovi{\v{c}}, T.~M{\"o}mke, K.~Steinh{\"o}fel, and P.~Widmayer.
  \newblock Job shop scheduling with unit length tasks: bounds and algorithms.
  \newblock \emph{Algorithmic Operations Research}, 2(1):1--14, 2007.
\bibitem{IK97}
  S.~Irani and A.~R.~Karlin.
  \newblock On online computation.
  \newblock In \emph{Approximation Algorithms for {$N\!P$}-Hard Problems, chapter 13},
    pp.~521--564. PWS Publishing Company, 1997.
\bibitem{KK11}
  D.~Komm and R.~Kr{\'a}lovi\v{c}.
  \newblock Advice complexity and barely random algorithms.
  \newblock In \emph{Proc.~of the 37th International Conference on Current Trends in
  Theory and Practice of Computer Science (SOFSEM~2011)}, {\em
  LNCS} 6543, pp.~332--343. Springer-Verlag, 2011.
\bibitem{Kou09}
  E.~Koutsoupias.
  \newblock The {$k$}-server problem.
  \newblock \emph{Computer Science Review}, 3(2):105--118, 2009.
\bibitem{LMPR01}
  S.~Leonardi, A.~Marchetti-Spaccamela, A.~Presciutti, and A.~Ros{\'e}n.
  \newblock On-line randomized call control revisited.
  \newblock \emph{SIAM Journal on Computing}, 31(1):86--112, 2001.
\bibitem{MMVW97}
  B.~M.~Maggs, F.~Meyer auf der Heide, B.~Voecking, and M.~Westermann.
  \newblock Exploiting locality for networks of limited bandwidth.
  \newblock In \emph{Proc.~of the 38th IEEE Symposium on Foundations of Computer Science (FOCS~1997)}, pp.~284--293, 1997.
\bibitem{MMS90}
  M. S. Manasse, L. A. McGeoch, and D. D. Sleator.
  \newblock Competitive Algorithms for On-line Problems.
  \newblock \emph{Journal of Algorithms}, 11(2):208--230, 1990.
\bibitem{ST85}
  D.~D.~Sleator and R.~E.~Tarjan.
  \newblock Amortized efficiency of list update and paging rules.
  \newblock \emph{Communications of the ACM}, 28(2):202--208, 1985.
\end{thebibliography}
\end{document}